\def\BibTeX{{\rm B\kern-.05em{\sc i\kern-.025em b}\kern-.08em
    T\kern-.1667em\lower.7ex\hbox{E}\kern-.125emX}}
       \newtheorem{thm}{Theorem}
\def\BibTeX{{\rm B\kern-.05em{\sc i\kern-.025em b}\kern-.08em
    T\kern-.1667em\lower.7ex\hbox{E}\kern-.125emX}}
\def\BibTeX{{\rm B\kern-.05em{\sc i\kern-.025em b}\kern-.08em
    T\kern-.1667em\lower.7ex\hbox{E}\kern-.125emX}}
\begin{document}
%potential titles
%Solving Constrained Combinatorial Optimization using penalty-term free HUBOs and QAOA
%vSCOOP: A Scalable Quantum-Computing \\ Framework to Constrained Combinatorial Optimization
%
\title{SCOOP: A Quantum-Computing Framework for Constrained Combinatorial Optimization\thanks{This research was supported in part by a National Sciences and Engineering Research Council (NSERC) of Canada Collaborative Research and Training Experience (CREATE) grant on Quantum Computing, NSERC Alliance Consortium Grant entitled Quantum Software Consortium -- Exploring Distributed Quantum Solutions for Canada (QSC), and NSERC Alliance grant on Quantum Computing for Optimal Mobility. 
\\Corresponding author: pangara@uvic.ca}}

\author{\IEEEauthorblockN{Prashanti Priya Angara}
\IEEEauthorblockA{
\textit{University of Victoria}\\
Victoria, BC, Canada \\
pangara@uvic.ca} 
\and
\IEEEauthorblockN{Emily Martins}
\IEEEauthorblockA{
\textit{University of Victoria}\\
Victoria, BC, Canada \\
emilymartins@uvic.ca}
\and
\IEEEauthorblockN{Ulrike Stege}
\IEEEauthorblockA{
\textit{University of Victoria}\\
Victoria, BC, Canada \\
ustege@uvic.ca} 
\and
\IEEEauthorblockN{Hausi M\"uller}
\IEEEauthorblockA{
\textit{University of Victoria}\\
Victoria, BC, Canada \\
hausi@uvic.ca} 
}

\maketitle

\begin{abstract}
%The Quantum Approximate Optimization Algorithm (QAOA) is designed to find optimal or near-optimal solutions to quadratic (and higher-order) unconstrained binary combinatorial optimization problems (QUBO or HUBO), which effectively model \textit{unconstrained} combinatorial optimization problems. 
% When solving optimization problems, we often seek methods that not only determine optimal solutions but also identify near-optimal solutions.
% In applications where trade-offs between accuracy, efficiency, and practicality are paramount,
% identifying near-optimal solutions can hold significant value over solely seeking optimal solutions. 
% The quantum approximate optimization algorithm (QAOA) determines solutions to quadratic and higher-order unconstrained binary optimization problems (QUBO and HUBO). 
While the ultimate goal of solving computationally intractable problems is to find a provably optimal solutions, practical constraints of real-world scenarios often necessitate focusing on efficiently obtaining high-quality, near-optimal solutions. The Quantum Approximate Optimization Algorithm (QAOA) is a state-of-the-art hybrid quantum-classical approach for tackling these challenging problems that are encoded using quadratic and higher-order unconstrained binary optimization problems (QUBO and HUBO). We present SCOOP, a novel QAOA-based framework for solving \textit{constrained} optimization problems. SCOOP transforms a constrained problem into an unconstrained  counterpart, forming SCOOP problem \textit{twins}. The QAOA quantum algorithm operates on the unconstrained twin to identify potential optimal and near-optimal solutions. Effective classical post-processing reduces the solution set to the constrained problem space. Our SCOOP approach is solution-enhanced, objective-function-compatible, and scalable. We demonstrate the framework on three NP-hard problems, \textsc{Minimum Dominating Set}, \textsc{Minimum Maximal Matching}, and \textsc{Minimum Set Cover} appearing in practical application domains such as resource allocation, communication networks, and machine learning. We validate SCOOP's feasibility and effectiveness on Xanadu PennyLane simulators.

\end{abstract}

\begin{IEEEkeywords}
Quantum computing, solving constrained optimization problems, problem transformation, QUBO, HUBO, hybrid quantum-classical algorithms, QAOA, SCOOP framework, penalty-free and scalable, optimal \& near optimal solutions,  profit problem, dominating set, minimum maximum matching, set cover, Xanadu, PennyLane 
\end{IEEEkeywords}
\section{Introduction}
\label{sec:introduction}

\vspace*{-3pt}

% \begin{itemize}
%     \item Quantum algorithms and which algorithms are usually used
%     \item No expectation for software engineers to pick penalty terms - better encoding of problems
%     \item Quantum week comments from Farhi etc.

% \end{itemize}
% An introductory paragraph on the use of quantum processing in optimization problems
When solving optimization problems, we often seek methods that not only determine optimal solutions but also identify near-optimal solutions. In applications where trade-offs between accuracy, efficiency, and practicality are paramount, identifying near-optimal solutions can hold significant value over solely seeking optimal solutions. We also seek methods that are scalable; one approach that hampers scalability is problem-specific tuning (e.g., penalty parameter tuning). 

Quantum computing offers promising avenues to tackle optimization problems, especially classically intractable combinatorial optimization problems (COPs).  Both quantum annealing (QA)~\cite{johnson2011quantum} and the quantum approximate optimization algorithm (QAOA)~\cite{farhi2014quantum} are well-suited techniques. 
However, to apply these methods, the original COP must be transformed to be encoded on a quantum computer. 

The QAOA is a variational quantum algorithm specifically designed to solve \textit{unconstrained} COPs on gate-based quantum hardware. 
The Ising model~\cite{johnson2011quantum, Lucas2014} is widely used in quantum combinatorial optimization because it provides a natural mathematical structure for encoding binary decision problems in a way that maps directly onto quantum hardware and algorithms like  QA or QAOA. The field of quantum optimization relies heavily on Ising model formulations that use quadratic terms due to their direct compatibility with QA hardware. Gate-based hardware has no such restriction, as higher-order terms can be encoded natively using multi-qubit gates~\cite{cowtan2019phase}. The higher order unconstrained binary optimization (HUBO) formulation offers a more expressive way to represent many optimization problems by directly incorporating multi-way ($K$-local) interactions between variables~\cite{wang_speedup_2025}. Rapid advancements in gate-based quantum computers, particularly in qubit fidelity and connectivity, are paving the way for more effective handling of multi-qubit interactions~\cite{ibmExpandsQuantum}, and consequently the encoding of HUBO problems. 

\begin{table*}[hbt!]
\centering
\caption{\textbf{SCOOP twins}: NP-hard constrained  COPs that allow the derivation of an unconstrained transformational equivalent COPs. \textbf{Legend}: \textsc{MinVC}: Minimum Vertex Cover, \textsc{MaxIS}: Maximum Independent Set, \textsc{MaxCl}: Maximum Clique} 
\begin{tabular}{|l|l|l|l|}
\hline
\textbf{Constrained Twin} & \textbf{Unconstrained Twin} & \textbf{Constraint Avoided} & \textbf{Classical Post-processing} \\
\hline
% Example row
\textsc{MinDS} [Sec.~\ref{sec:ds}] & \textsc{MaxPD} (HUBO) & Vertex domination & Add vertices [Alg.~\ref{alg:add_vertices}]  \\
\hline
\textit{\textsc{M$^3$}} [Sec.~\ref{sec:min-max}] & \textsc{MaxPES} (HUBO)& Maximality, matching & Add edges [Alg.~\ref{alg:add_edges}], Ensure matching [Alg.~\ref{alg:eds_to_mm}]  \\
\hline
\textsc{MinEDS} [Sec.~\ref{sec:min-max}] & \textsc{MaxPES} (HUBO)& Edge domination & Add edges [Alg.~\ref{alg:add_edges}]  \\
\hline
\textsc{MinIEDS} [Sec.~\ref{sec:min-max}] & \textsc{MaxPES} (HUBO)& Edge domination, edge independence & Add edges [Alg.~\ref{alg:add_edges}], Ensure independence of edges [Alg.~\ref{alg:eds_to_mm}]  \\
\hline
\textsc{MinSC} [Sec.~\ref{sec:setCover}] & \textsc{MaxPSC} (HUBO)& Set cover & Add elements [Alg.~\ref{alg:add_elements}]  \\
\hline
\textsc{MinVC} & \textsc{MaxPC} (QUBO) & Edge coverage & Add vertices~\cite{Stege2002, Angara2025}  \\
\hline
\textsc{MaxIS} & \textsc{MaxPI} (QUBO) & Edge coverage & Remove conflicted vertices~\cite{van2008tractable, Angara2025}  \\
\hline
\textsc{MaxCl} & \textsc{MaxPCl} (QUBO) & Complete subgraph & Remove conflicted vertices in complement graph~\cite{scott2004classical, Angara2025}  \\
\hline
\end{tabular}

\label{tab:scpp_twin_mapping}
\end{table*}

%\clearpage

\textit{Constrained} COPs involve constraints that limit the set of feasible solutions, thereby restricting the solution space.  
To model constrained COPs, penalties are typically introduced to discourage infeasible solutions~\cite{glover2022quantum}. While this approach guarantees to retain optimal solutions, it can encourage non-optimal infeasible solutions, since choosing optimal penalty values is challenging~\cite{smith1997penalty, roch2023effect}. 

The field of quantum computing and optimization has seen significant advancements in recent years, with extensive studies on QUBOs and penalty methods to encode COPs. In 2014, Farhi introduced the QAOA technique~\cite{farhi2014quantum}, which is specifically designed for solving unconstrained problems on gate-based quantum computers. Meanwhile, QA has historically been limited to problems with quadratic terms, requiring higher-order terms to be reduced through a process called quadratization. 
In 2019,  Hadfield et al.~\cite{hadfield2019quantum} introduced an alternative method for constrained problems that restricts quantum evolution to the feasible subspace via carefully constructed quantum operators.

% \color{red}
% \textbf{Replace this:}
% Angara et al.~\cite{Angara2025} proposed a third alternative to solving constrained COPs by changing the problem specification. For the three constrained optimization problems \textit{\textsc{Minimum Vertex Cover}}, \textit{\textsc{Maximum Independent Set}} and \textit{\textsc{Maximum Clique}}, they solve transformational equivalent \textit{unconstrained} \textit{profit} problems and then use efficient classical post-processing to obtain corresponding solutions for the constrained problem without sacrificing the solution quality. The approach does not only remove the difficulty of choosing good penalties, the authors also demonstrate better performance and practical scalability in terms of approximation ratios and summed optimal and near-optimal probability metrics. 

% In this paper, we generalize and extend this technique with the novel SCOOP framework~\ref{sec:profit-framework}.

% \textbf{?}
\color{black}

%\textbf{by:}

A third alternative, distinct from both penalty-based methods and feasibility-preserving approaches, was recently proposed by Angara et al.~\cite{Angara2025}. This work investigates transformational equivalent unconstrained ``profit'' variants of three transformational equivalent COPs. These unconstrained variants are solved using  Farhi's original QAOA formulation (or \textit{vanilla} QAOA), followed by classical post-processing. For these three problems, the authors demonstrate improved performance and practical scalability in terms of approximation ratios, as well as summed optimal and near-optimal probability metrics.

In this paper, we introduce the novel SCOOP framework to solve \textit{constrained} COPs using vanilla QAOA. {Table~\ref{tab:scpp_twin_mapping} presents the NP-hard constrained problems alongside their unconstrained counterparts. It also details the type of transformations used to render all solution subsets feasible, and refers to the corresponding classical polynomial-time post-processing algorithm required to map solutions for the unconstrained problem into the feasible subspace of the original, \textit{constrained}, COP.
This framework provides a better encoding that enables the discovery of optimal and near-optimal solutions at scale. To achieve this, SCOOP guides the derivation of an unconstrained COP from the constrained COP. This process results in a penalty-free cost Hamiltonian that is well-suited for QAOA. Subsequently, the viable solutions obtained by QAOA are efficiently post-processed. Our approach offers the advantages that it scales for all inputs and supports the determination of near-optimal solutions.

% \textcolor{red}{Furthermore, penalty-based methods do not scale well because each problem instance requires tuning different penalty parameters. In contrast, the unconstrained formulation using the SCOOP approach avoids penalties altogether, making it more scalable since it eliminates the need to determine optimal penalty values for each instance.}

The main contributions of our paper are: (1) The SCOOP framework that guides the derivation of an unconstrained COP from a constrained COP. The SCOOP derivation leads to a penalty-free cost Hamiltonian that is palatable for QAOA. The viable solutions produced by QAOA are efficiently post-processed.
    (2) We formalize the conditions that the relationship between the constrained and unconstrained COPs must satisfy. 
    (3) Following the SCOOP framework, we derive unconstrained COPs to solve the constrained problems listed in Table~\ref{tab:scpp_twin_mapping}.
    (4) For each unconstrained COP, we present the cost Hamiltonian that can serve as input to QAOA.
    (5) For both \textit{\textsc{Minimum Dominating Set}} (\textsc{MinDS}) and \textit{\textsc{Minimum Maximal Matching}} (\textsc{M$^3$}), we discuss in detail the setup and  results of our experiments  performed on Xanadu's Pennylane simulator for 3-regular graphs with up to ten qubits and eight layers of QAOA.  To the best of our knowledge, experimental results for both \textsc{MinDS} and \textit{\textsc{M$^3$}} using \textit{quantum} techniques remain unexplored in the literature.

 In the remainder of this paper, we discuss related work (Sec.~\ref{sec:background}) and 
define the necessary relationship between a constrained and an unconstrained COP, and introduce our SCOOP framework (Sec.~\ref{sec:profit-framework}). 
%For our case studies using the SCOOP framework for the constrained COPs  \textit{\textsc{Minimum Dominating Set}}, \textit{\textsc{Minimum Maximal Matching}}, and \textit{\textsc{Minimum Set Cover}} see Secs.~\ref{sec:ds}--\ref{sec:setCover}. 
We derive SCOOP twins and their Hamiltonians in Sections~\ref{sec:ds}--\ref{sec:setCover}.
Sections~\ref{sec:methodology} and~\ref{sec:results} present an evaluation of the SCOOP framework using Xanadu PennyLane, and Section~\ref{sec:conclusions} concludes the paper.  

%Quadratic and QUBO problems - lots do MaxCut and a few constrained problems, but we would like to focus on some more challenging problems of Min Dominating Set, Min Max Matching, Min Set Coiver, Min Hitting Set - these are naturally encoded by higher order terms, with some of the qudratic formulations using extra qubits/lot of penalties required for enforcing constraints

% A profit approach is not straightforward for these problems 

% \input{sections/02-problems-of-study}
\vspace*{-5pt}

\section{Related Work}\label{sec:background}

\vspace*{-3pt}

We  introduce the problems of study,  selected to illustrate our SCOOP framework, related work QAOA, as well as  on encoding optimization problems for the use with QAOA. 
%\color{red} I'm not sure if this is the best place for the info about the problems. This could also be part of the intro. What works better? Or we need a couple of extra sentence about that we chose these problems for our case study ... or something along these lines\color{black}

\subsubsection*{Problems of Study}
In the early 1970s, among the first sets of problems proved to be NP-hard are \textit{\textsc{Minimum Dominating Set}},  \textit{\textsc{Minimum Set Cover}}, and \textit{\textsc{Minimum Maximal Matching}} \cite{garey1979computers}. We picked these problems for our case studies and give problem definitions in Secs.~\ref{sec:profit-framework}--\ref{sec:setCover}. Both \textit{\textsc{MinDS}} and \textit{\textsc{M$^3$}} remain hard even when the input graphs are restricted to bipartite graphs of maximum degree 3~\cite{chlebik2008approximation,yannakakis1980edge}. 
% \color{green}
% Both  Set Cover and Dominating Set are also hard when parameterized by their natural parameter--they are both known to be complete for the parameterized complexity class W[2]; thus, they are not just unlikely to be solvable in polynomial time but also unlikely to be in FPT unless W[2] = FPT \cite{Cygan2015}. In contrast, the decision version Minimum Maximal Matching parameterized by the size of the matching is in FPT. 
These problems have applications in various domains: Finding small dominating sets has applications in social networks \cite{Bouamama2021}, wireless networks \cite{Bai2020, Liang2023} and biological networks---for the identification of disease-genes and to extract biological motifs \cite{Nacher2016}. 
%For example, minimum dominating set can be used to analyze biological networks to identify disease-genes and to extract biological motifs \cite{Nacher2016}. Additionally, finding a minimum connected dominating set is important for building virtual backbones for wireless networks \cite{Bai2020, Liang2023}; a virtual backbone is a group of nodes in a network that perform various routing jobs \cite{Liang2023}. 
%Hitting sets have applications in fault tolerance \cite{Hussain2024} and in data mining \cite{Wang2019}. For instance, \cite{Wang2019} use the hitting set problem as part of their approach to achieve privacy-preserving data publishing. 
Minimum set covers find applications in networking, such as in the lifetime maximization of wireless sensor networks \cite{Zhang2016} and link-state routing approaches \cite{Mihailovic2022}.
Minimum maximal matchings are valuable in resource allocation  \cite{demange2008minimum}. 
%In wireless sensor networks, it may be difficult or infeasible to replace or recharge sensors; thus, it is important to maximize the lifetime of the network to ensure it can fulfill its tasks \cite{Zhang2016}. One technique is to use disjoint set covers; the goal is to find the maximum number of disjoint set covers from a set of sensors \cite{Cardei2005}. Then the set covers can be activated in batches thereby allowing other batches of sensors to sleep to help maximize their lifetime \cite{Zhang2016}. 
%Moreover, graph coloring has numerous application in computer networks including in network allocation tasks \cite{Checco2017}. Graph coloring also has applications in job scheduling to prevent conflict such as for exam scheduling \cite{Arifin2022}. Thus, it is evident that dominating set, hitting set, set cover, and graph coloring have many useful and important applications.

\subsubsection*{Quantum Approximate Approximation Algorithm}
QAOA, introduced by Farhi et al.~in 2014~\cite{farhi2014quantum}, is a hybrid quantum-classical heuristic algorithm designed to solve combinatorial optimization problems. The standard \textit{vanilla} QAOA framework includes: (1) an initial equal superposition state, $\vert \psi_0 \rangle = \ket{+}^{\otimes n}$, (2) an ansatz comprising alternating parameterized cost unitaries ($U(\boldsymbol{\gamma})=e^{-i \boldsymbol{\gamma} \hat{H}_C}$) and mixing unitaries ($U(\boldsymbol{\beta}) = e^{-i \boldsymbol{\beta} \hat{H}_M}$) applied over $p$ layers, and (3) a classical optimizer that minimizes the expected value of the cost Hamiltonian. The algorithm’s performance depends on the choice of ansatz, parameters $(\boldsymbol{\gamma}, \boldsymbol{\beta} )$, and classical optimizer, with common methods including L-BFGS, COBYLA, Adam, and Gradient Descent. Numerous variants of QAOA targetting constrained optimization have emerged over the years, notably by Hadfield et al.~\cite{hadfield2019quantum}, Saleem et al.~\cite{SaleemTTS23, tomesh2023divide}, Egger et al.~\cite{egger2021warm}. 
\color{red}
% Our earlier work~\cite{Angara2025} presented unconstrained transformational equivalent "profit" formulations for the constrained problems of \textsc{Minimum Vertex Cover (MinVC)}, \textsc{Maximum Independent Set (MaxIS)}, and \textsc{Maximum Clique (MaxCl)}. For example, \textsc{Maximum Profit Cover (MaxPC)} is an unconstrained variant of \textsc{MinVC} in which covering every edge is no longer required; instead, the \textit{extent} of coverage is incentivized. Solutions to the unconstrained problems can be efficiently converted into solutions of the original constrained problems through post-processing, without sacrificing the solution quality. This method yields better solution sampling of optimal and near-optimal solutions as compared to penalty-term approaches, demonstrates scalability up to 70 qubits (via Argonne's QTensor simulator), and achieves an approximation ratio $>0.8$ across varying densities, and 3-regular graphs. 
\color{black}
% \color{blue} We should add something about the performance on the experiments, shouldn't we? That for these three problems, which we could name, near optimal worked ... \color{black} 
In existing studies, QAOA performance has primarily been evaluated on Erdős-Rényi random graphs with varying edge probabilities~\cite{golden2023numerical, SaleemTTS23} and bounded-degree graphs, particularly random 3-regular graphs~\cite{shaydulin2023qaoawith, lykov2021performance}. Herrman et al.~\cite{herrman2021impact}  investigated \textit{\textsc{MaxCut}} performance using QAOA with up to three layers across various small graphs (up to eight nodes).

\subsubsection*{Unconstrained Binary Optimization Problems}
%\color{red}I'm not convinced that Related Work is the best heading here, since the QAOA stuff is also related work ... but right now I don't have a better idea how for the title. Maybe Unconstrained Binary Optimization Problems? U. \color{black}
While QA strictly requires a QUBO formulation, QAOA generally can handle both QUBO and HUBO problems. Initially, lower-order terms and, consequently, QUBOs might be preferred. However, converting a HUBO into a QUBO through quadratization necessitates introducing auxillary variables and additional penalties \cite{glover2022quantum}. In contrast, Campbell et al.~\cite{campbell2022qaoa} investigate the advantages of incorporating higher-order terms (HOTs) into the problem formulation. Their research underscores the potential of utilizing the inherent capabilities of gate-model quantum computers to manage HOTs, suggesting an expanded application domain for QAOA and paving the way for near-term quantum superiority on NP-hard problems.  

Pelofske et al.~\cite{pelofske2023quantum} conducted an experimental comparison between QAOA and QA on different hardware platforms. Their study focused on cubic ZZZ interactions, which are natively constructible for QAOA, while QA necessitates order reduction to quadratic interactions using auxiliary variables. 

Mandal et al.~\cite{mandal2020compressed} study quadratization and propose a method of degree reduction that works directly in the Ising space. 
They note that a \emph{sparse} problem in the Ising space—i.e., its cost Hamiltonian, defined by a sum of local terms that each involves a small, bounded number of variables and each variable appears in only a small number of terms—\emph{is not necessarily sparse} in Boolean space and vice versa. 
A simple example illustrates this: Consider polynomial $2^n \prod_{i=1}^n b_i$. Applying the transformation from binary to Ising variables ($b_i \leftarrow (1 - Z_i)/2$) results in the Hamiltonian $\sum_{I \subseteq \{1, \dots, n\}} (-1)^{|I|} \prod_{i \in I} Z_i$, which grows exponentially \cite{mandal2020compressed}. 
In the context of the traveling salesperson problem (TSP), Glos et al.~\cite{Glos_Krawiec_Zimboras_2022} propose a trade-off between the number of qubits and the circuit depth to demonstrate an efficient binary encoding for TSP.

We note that the feasibility of the encoding of HUBOs or HOTs into a cost Hamiltonians is instance dependent; understanding the structural properties of the problem is crucial for assessing whether or not higher-order terms must be used as further discussed in Secs.~\ref{sec:methodology} and  \ref{sec:results}.
%\color{black}

\vspace*{-6pt}

 

\vspace*{-5pt}

\section{The SCOOP framework}

%\vspace*{-5pt}

\label{sec:profit-framework}
%\color{red}
In this section, we formalize a blueprint that guides us in deriving an unconstrained COP, $P_U$, from a constrained one, $P_C$. $P_U$ possesses the necessary properties to obtain optimal and near-optimal solutions using the vanilla QAOA approach for $P_U$, followed by efficient post-processing.

In general terms, a combinatorial optimization problem $P$ takes as input, an instance $x$ from its input domain $\cal I$ (e.g., a graph from the set of all simple undirected graphs).  
For a given $x \in I$,  a solution is $s\in \text{Sol}(x)$, where $\text{Sol}(x)$ denotes the general set of possible solutions for $x$ (e.g.,  if the instance is a graph $G= (V,E)$ and every solution is a subset of vertices, then $\text{Sol}(G) = {\cal P}(V)$, the powerset of $V$). 
The objective function of $P$ for an instance $x\in I$ can be described as $\texttt{obj}^x: \text{Sol}(x)\longrightarrow \mathbb Q$ 
(e.g., if the measure of the solution is size, and a solution to a graph problem is a vertex subset, say $V'\subseteq V$, then $\texttt{obj}^x(s) = \texttt{obj}^G(V')= |V'|$). 
A typical format for describing a COP  $P$: For $x\in I$ and $s\in \text{Sol}(x)$, \linebreak

\vspace*{-25pt}

\begin{align*}
&\text{Optimize:} \quad |\texttt{obj}^x(s)| \\
\end{align*}

\vspace*{-15pt}

\noindent Note that in the case of an unconstrained COP, any $s\in \text{Sol}(x)$ is a feasible solution. When talking about a COP that is unconstrained, we may call it $P_U$. If instead the COP considered is a \textit{constrained} COP, then the set of feasible solutions is a subset of $\text{Sol}(x)$, denoted
$\text{Sol}_C(x)$ where $\text{Sol}_C(x)\subset \text{Sol}(x)$.

A typical format for describing a constrained COP  $P_C$ is: for $x\in I$ and $s\in \text{Sol}(x)$, 

\vspace*{-12pt}

\begin{align*}
&\text{Optimize:} \quad |\texttt{obj}^x(s)| \\
&\text{Subject to:} \quad s\in \text{Sol}_C(x)
\end{align*}

% \noindent \color{red} I don't think we need these: $ P_C = (I_C, \text{Sol}_C, \texttt{obj}_C)$, but if  this paragraph is removed then the start of SCOOP twin needs a bit more explanation because of the subscripts used. \color{black} In short, we refer to constrained COPs as $ P_C = (I_C, \text{Sol}_C, \texttt{obj}_C)$. Here, $ I_C$ denotes the input domain, $\text{Sol}_C\subseteq \text{Sol}$ the feasible set of solutions in $\text{Sol}$, and  $\texttt{obj}_C$ the objective function.  Similarly, let  $ P_U = (I_U, \text{Sol}_U, \texttt{obj}_U)$ be an unconstrained COP with input domain $ I_U$, its sets of solutions $\text{Sol}_U = \text{Sol}$, and  its objective function  $\texttt{obj}_U$.  

%We next characterize in more general terms such a relationship between a constrained combinatorial optimization problem $P_C$ and an unconstrained transformational equivalent $P_U$.

\color{black}
% Let $ P_C = ({I}_C, \text{Sol}_C, \texttt{obj}_C)$ be a constrained combinatorial optimization problem with input domain $ I_C $, its feasible set of solutions $\text{Sol}_C$, and objective function $\texttt{obj}_C: I\times \text{Sol}_C\longrightarrow \mathbb Q$.  Similarly, let  $ P_U = (I_U, \text{Sol}_U, \texttt{obj}_U) $ be an unconstrained combinatorial optimization problem with input domain $ I_U $, its set of solutions $\text{Sol}_U$, and objective function $\texttt{obj}_U: I\times\text{Sol}_U\longrightarrow \mathbb Q$.  
%Note that both problems are defined over the same input domain $ I $. Furthermore, for each instance $ x \in I $,  $ \text{Sol}_C(x) \subseteq \text{Sol}_U(x) $.

Before describing the process of how to derive an unconstrained COP $P_U$ from a given constrained COP $P_C$, we define a relationship between the problems that is sufficient to obtain optimal and near-optimal solutions using the vanilla QAOA approach for $P_U$, followed by efficient post-processing.

 We say that a constrained COP $ P_C $ with input domain $I_C$, feasible solutions $\text{Sol}_C$, and objective function $\texttt{obj}_C$, and an unconstrained COP $ P_U $ with input domain $I_U$, solutions $\text{Sol}_U = \text{Sol}$, and objective function $\texttt{obj}_U$,  are \textit{\textbf{s}olution-enhanceable, \textbf{c}onstrained-unconstrained,  \textbf{o}bjective-function-compatible   \textbf{o}ptimization \textbf{p}roblem twins} or \textit{SCOOP twins}
if the following conditions hold:

%We say that \emph{ $ P_C $ is a constraint refinement of  $ P_U $}, written as $ P_C \preceq_{\mathrm{cr}} P_U $,
%if the following conditions hold:

\begin{enumerate}
    \item \textbf{Identical Input Domains}: The problems share the same input domain $ I $, i.e., $I = I_C = I_U$.
    
    \item \textbf{Solution Containment}: for every instance $ x \in I $, $ \text{Sol}_C(x) \subseteq \text{Sol}_U(x) $. That is, each feasible solution  for $ P_C $ is a solution for $ P_U $. 
    
    \item \textbf{Objective Function Compatibility}: For any $x\in I$ there exists a 
    %polynomial-time computable \textbf{invertable} function   $ f $ such that: for each $s\in \text{Sol}_C(x)$,  $ \text{cost}_C(s) = f_x(\text{cost}_U(s)) $.
%
 %There exists a 
   polynomial-time computable function 
$ f_x: \mathbb{Q} \to \mathbb{Q} $ such that for  all $ s \in \text{Sol}_C(x) $:
$
\texttt{obj}^x_C(s) = f_x(\texttt{obj}^x_U(s)) $  and $ \texttt{obj}^x_U(s) = f_x^{-1}(\texttt{obj}^x_C(s)).$
%Alternatively, it suffices for $ f $ to be bijective and efficiently computable \emph{on the set of feasible costs}, i.e., on $ \text{cost}_B(\text{Sol}_A(x)) $.

    \item \textbf{Solution Enhance-ability}: For each $x\in I$, and each $s\in \text{Sol}_U(x)$, $s\in \text{Sol}_C(x) $ or there exists $s'\in \text{Sol}_C(x)$, where $s'$ is computable in polynomial time from $s$. Furthermore,  $\texttt{obj}^x_U(s')$ is at least as good as  $\texttt{obj}^x_U(s)$ and $\texttt{obj}^x_C(s')$ is at least as good as  $f_x(\texttt{obj}^x_U(s))$.
\end{enumerate}

\noindent We next describe a process that may allow, given a constrained  problem $P_C$, to derive an unconstrained  problem $P_U$ such that $P_C$ and $P_U$ are SCOOP twins.
Let $P_C$ be a constrained combinatorial optimization problem where for $x\in I$, a feasible solution $s \in \text{Sol}_C(x)$ has cost $\texttt{obj}^x_{C}(s)$.

\begin{description}
    \item[\textbf{Step 1}]~Identify the constraints of $P_C$. 
    \item[\textbf{Step 2}]~Consider any infeasible solution $s'$ for $P_C$. Identify a way to quantify the extent to which the solution satisfies the constraint, deriving $\texttt{obj}^x_{U}(.)$ by relating it to   $\texttt{obj}^x_{C}(s')$.
    %Here, as investment consider  $\texttt{obj}_{C}(x,S_C)$.  
    \item[\textbf{Step 3}]~Develop $P_U$ using cost function $\texttt{obj}^x_{U}(.)$.
    \item[\textbf{Step 4}]~Prove objective function compatibility.
    \item[\textbf{Step 5}]~Prove solution enhance-ability.
\end{description}

\smallskip

\noindent Note that the existence of the SCOOP twin $P_U$ for a given $P_C$ yields NP-hardness for $P_U$ as long as $P_C$ is NP-hard.

%\color{red}

We illustrate our framework considering the NP-hard problem {\sc Minimum Maximal Matching (M$^3$)} \cite{garey1979computers}, which is defined as follows.
For a simple graph  undirected graph $G = (V,E)$ and a subset $\textit{M}\subseteq E$ of edges, 

\vspace*{-15pt}

\begin{align*}
&\text{Minimize:} \quad |M| \\
&\text{Subject to:}\quad\text{(1)~} M \text{~is a matching: for every pair of edges}\\
&\quad\quad\quad\quad\quad\quad ab,cd \in M, ab\not= cd, \quad a,b,c \text{~and~} d  \text{~are}\\
&\quad\quad\quad\quad\quad\quad \text{pairwise distinct}\\
& \text{~\!~\!}\quad\quad~\quad~\quad\text{~(2)~} \text{matching~} M \text{~is a maximal: there is no edge}\\
&\quad\quad\quad\quad\quad\quad e \in E\text{~with~} M\cup\{e\}  \text{~is a matching}\\
\end{align*}

\vspace*{-15pt}

\noindent Here, $\texttt{obj}^G_{M^3}(M) = |M|$. Solutions to the problem are subsets of edges (i.e., $\text{Sol}(G) = |E|$) that must satisfy constraints (1) and (2), describing $\text{Sol}_C(G) \subset \text{Sol}(G)$. To quantify how much a subset $E'$ of edges satisfies the constraints, when $E'$ is not feasible, that is $E'$ does not satisfy one or both of the constraints, we say that an edge $e\in E$ is \textit{covered} by $E'$ if there is an edge $e'\in E'$ such that $e$ and $e'$ share a common endpoint. In other words every, edges covered by $E'$ are all edges in $E'$ plus all edges in $E\setminus E'$ that are "adjacent" to $E'$.
Now we can define $\texttt{obj}^G_{\textit{PES}}(E') = |\{e\in E \mid e \text{~is covered by~} E'\}|-|E'|$.

We are ready define the problem {\textit{\textsc{Maximum Profitable Edge Set (MaxPES)}}:  for  a graph $G = (V,E)$ and $\textit{E'}\subseteq E$, 

\vspace*{-15pt}

\begin{align*}
&\text{Maximize:} \quad \text{~profit~} \mathfrak p_{\text{PES}~} \text{where~} \mathfrak p_{\text{PES}~} =|\texttt{obj}^G_{\textit{PES}}(E')| \\
\end{align*}

\vspace*{-15pt}

\noindent In Sec.~\ref{sec:min-max} we prove that {\textit{\textsc{M$^3$}} and  {\textit{\textsc{MaxPES}} are indeed SCOOP twins. 
In the following section, we document that our framework applies to the two NP-hard problems {\textit{\textsc{Minimum Dominating Set}} and {\textit{\textsc{Minimum Set Cover}} (Secs.~\ref{sec:ds},\ref{sec:setCover}).

%\vspace*{-6pt}

% Fig.~\ref{fig:cp-pu-flowchart} shows the workflow of solving a constrained optimization problem, $P_C$. Our framework involves relaxing the constraints of $P_C$ to obtain a problem relaxation $P_U$ of $P_C$; $P_U$ is an unconstrained optimization problem. It allows solving $P_C$ by first solving its $P_U$, and then uses an efficient classical post-processing procedure applied to obtained solutions for $P_U$, to obtain solutions to $P_C$ while maintaining the solution quality.

% \begin{figure}
%     \centering
%     \includegraphics[width=1\linewidth]{}
%     \caption{Workflow of solving constrained optimization problems using the our problem relaxation framework. From \cite{Angara2025}}
%     \label{fig:cp-pu-flowchart}
% \end{figure}

\section{Minimum Dominating Set}
\label{sec:ds}

A \textit{dominating set} in an undirected graph $G = (V, E)$\footnote{In this section, we refer to simple undirected graphs without isolated vertices.} with vertex set $V$ and edge set $E$ is a subset $DS \in V$  such that every vertex in $G$ is ``dominated'' by $\textit{DS}$. A vertex is considered to dominate itself and all of its adjacent vertices. This ensures that all vertices in the graph are either part of the dominating set or directly connected to a vertex in it. A smallest possible dominating set is referred to as \textit{minimum dominating set}. 

%\subsection{\textsc{Minimum Dominating Set}}
We define the problem \textit{\textsc{Minimum Dominating Set}} (\textit{\textsc{MinDS}}) for a graph $G = (V,E)$ as follows.
A subset of vertices  $\textit{DS}\subseteq V$ is a \textit{dominating set} if every vertex  $v \in V$ is either in  $\textit{DS}$ or adjacent to at least one vertex in $\textit{DS}$. The objective is to find the smallest such set:
\vspace*{-5pt}
\begin{align*}
&\text{Minimize:} \quad |\textit{DS}| \\
&\text{Subject to:~} \text{~for all~} v \in V\\ 
& \quad\quad\ v \in \textit{DS~} \text{~or~} \text{~there exists~} 
 u \in N(v) \ \mbox{such that} \ u \in \textit{DS}
\end{align*}

%\subsection{\textit{\textsc{Maximum Profit Domination}}}
%\color{blue}
\noindent Using our SCOOP framework for \textit{\textsc{MinDS}}, we derive the unconstrained problem \textit{\textsc{Maximum Profit Domination}} or \textsc{MaxPD}. Theorem \ref{thm:equiv-ds-pd} below concludes the proof that \textit{\textsc{MinDS}} and \textit{\textsc{MaxPD}} are SCOOP twins.  

\noindent \textbf{Step 1: Identify constraints} \\
The constraint that \textit{\textsc{MinDS}} must satisfy to ensure feasibility is that the subset $\textit{DS} \subseteq V$ is a dominating set.

\noindent \textbf{Step 2: Quantify feasibility of infeasible solution} \\
Consider the \textit{extent} to which a subset of vertices \textit{dominates vertices} and its \textit{closeness} to being a dominating set (i.e., the more vertices are dominated, the better).

\noindent \textbf{Step 3: Develop objective function $\texttt{obj}^G_{PD}$}\\
For a subset $\textit{PD}\subseteq V$ in $G$,\\
$
\texttt{obj}^G_{PD}(PD) = |\{v \in V| v~\text{is dominated by}~ PD\}| - |PD|
$. We also refer to the value produced by the objective function $\texttt{obj}^G_{PD}(PD)$ as \textit{profit}, $\mathfrak{p}_{PD}$.

\smallskip

 Before moving on to Step 4, we formulate the problem {\sc Maximum Profit Domination (MaxPD)}~\cite{fernau2019profit} for a graph $G=(V, E)$ and $\textit{PD} \subseteq V$: 
 \vspace{-5pt}
\begin{align*}
& \text{Maximize:} \text{~profit~} \mathfrak p_{\text{PD}}, \text{where~} 
{\mathfrak p}_{\text{PD}} = \texttt{obj}^G_{PD}(PD)
 \end{align*}

\vspace*{-5pt}

\noindent \textbf{Step 4: Objective function compatibility} \\
From Theorem~\ref{thm:equiv-ds-pd} below, we can derive function $f_G(.)$. 
%due to the equivalence between  \textit{\textsc{MaxPD}} and  \textit{\textsc{MinDS}}. 
That is, $f_G(\mathfrak{p}) = |V| - k$ and $f_G^{-1}(k) = |V| - \mathfrak{p}$.

% |V_{\text{PD}}(G, \textit{PD})| - |\textit{PD}|

\noindent \textbf{Step 5: Solution enhance-ability}\\
Alg.~\ref{alg:add_vertices} describes a polynomial-time post-processing procedure that converts a solution $\textit{PD}$ for \textit{\textsc{MaxPD}} to a domintating set, i.e., a solution for  \textit{\textsc{MinDS}} while preserving solution quality.

\color{black}
% The SCOOP twin of the \textsc{MinDS} problem involves relaxing the requirement that the subset to be determined must be a dominating set: If we consider the \textit{extent} to which this subset \textit{dominates vertices} and its \textit{closeness} to being a dominating set (i.e., the more vertices dominated, the better), then we obtain the graph problem {\sc Maximum Profit Domination}~\cite{fernau2019profit} or \textit{\textsc{MaxPD}}. For a subset $\textit{PD}\subseteq V$ in $G$, the number of vertices dominated by $\textit{PD}$, is the \textit{gain}, and the number of vertices spent to dominate these vertices, i.e., the size of $\textit{PD}$, is considered the \textit{loss}. The profit of $\textit{PD} \subseteq V$ for a graph $G=(V,E)$ is then  defined as  
% $\textit{profit} = \text{gain} - \text{loss}$.

% \noindent Here, $V_{\text{PD}}(G, \textit{PD})$ represents vertices in  $G$ dominated by $\textit{PD}$:
% \begin{align*}
% V_{\text{PD}}(G, \textit{PD}) = \{v \in V, u \in N(v): v \in \textit{PD} \vee u \in \textit{PD}\}
% \end{align*}

% \noindent We call a subset $\textit{PD}$ with maximum profit  also a \textit{maximum profit domination}. \textit{\textsc{MinDS}}, \textit{\textsc{MaxPD}}  are  NP-hard~\cite{fernau2019profit}.

\begin{thm}~\cite{fernau2019profit}\label{thm:equiv-ds-pd}
For any graph $G =(V,E)$, $G$ has a dominating set $\textit{DS}\subseteq V$ of size $k$ if and only if $G$ has a subset $\textit{PD}\subseteq V$ with profit ${\mathfrak p}_{\text{PD}}=|V| - k$. 
\end{thm}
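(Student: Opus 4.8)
The plan is to prove the biconditional by establishing each direction as a construction that preserves the stated value relationship. Recall that $\texttt{obj}^G_{PD}(PD) = |\{v \in V \mid v \text{ is dominated by } PD\}| - |PD|$. The central observation is that when $PD$ is itself a dominating set, \emph{every} vertex is dominated, so the first term equals $|V|$ and the profit reduces to $|V| - |PD|$. This immediately handles the forward direction.

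First I would prove the ``only if'' direction. Suppose $\textit{DS} \subseteq V$ is a dominating set of size $k$. Since every vertex is dominated by $\textit{DS}$, the domination count is $|V|$, hence $\mathfrak{p}_{\text{PD}}(\textit{DS}) = |V| - k$. Taking $\textit{PD} = \textit{DS}$ gives a subset with profit exactly $|V| - k$, as required.

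The ``if'' direction is where the real content lies, and I expect it to be the main obstacle. Suppose $\textit{PD} \subseteq V$ has profit $|V| - k$; I must produce a dominating set of size $k$. The issue is that $\textit{PD}$ need not dominate all vertices, so I cannot simply set $\textit{DS} = \textit{PD}$. The plan is to invoke the Solution Enhance-ability property established for this twin pair (Step 5, Alg.~\ref{alg:add_vertices}): that procedure converts any $\textit{PD}$ into a genuine dominating set $\textit{DS}$ in polynomial time without decreasing profit. Concretely, if $d$ vertices remain undominated by $\textit{PD}$, adding a suitable set of at most $d$ vertices makes $\textit{PD}$ dominating while each added vertex raises the domination count by at least as much as it raises $|PD|$ (each newly added vertex dominates itself plus at least the previously undominated vertices it covers), so profit does not decrease. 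The resulting dominating set $\textit{DS}$ satisfies $\mathfrak{p}_{\text{PD}}(\textit{DS}) \geq |V| - k$, and since $\textit{DS}$ is dominating, $\mathfrak{p}_{\text{PD}}(\textit{DS}) = |V| - |\textit{DS}|$, yielding $|\textit{DS}| \leq k$.

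The remaining gap is that I need a dominating set of size \emph{exactly} $k$, not merely at most $k$. This is easily closed: given a dominating set of size at most $k$, one can add arbitrary vertices to pad it up to size exactly $k$ (any superset of a dominating set is still dominating, and the graph has at least $k$ vertices since a profit of $|V|-k$ with a nonempty dominated region forces $k \leq |V|$). Thus the key step is correctly arguing that the enhancement procedure never sacrifices profit; once that monotonicity is in hand, the size bound and the final padding argument are routine.
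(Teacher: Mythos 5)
Your proposal is correct and follows essentially the same route as the paper's own proof: the forward direction is the direct computation that a dominating set dominates all of $V$ (the paper's sketch even contains a typo, writing $|E|-k$ where $|V|-k$ is meant, which your version gets right), and the backward direction invokes Alg.~\ref{alg:add_vertices} together with the observation that adding a previously undominated vertex never decreases profit. Your two additions---the explicit monotonicity argument and the padding step to reach size exactly $k$ (justified by $k \le |V|$ from nonnegativity of profit)---are details the paper's sketch leaves implicit, not a different approach.
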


\begin{proof} (Sketch) \item $(\Rightarrow)$  Determining the profit ${\mathfrak p}_{\text{PD}}$ of a dominating set $\textit{DS} \subseteq V$ for $G$ results in ${\mathfrak p}_{\text{PD}}=\texttt{obj}^G_{PD}(DS)$. 
%|V_{\text{PD}}(G,\textit{DS})| - |\textit{DS}|$, where $V_{PD}(G, \textit{DS})$ denotes the set of vertices dominated by $\textit{DS}$. 
Since $\textit{DS}$ is a dominating set %, $V_{\text{PD}}(G, \textit{DS}) =  V$ and therefore 
${\mathfrak p}_{\text{PD}}= |E| - k$.

\item $(\Leftarrow)$  Consider a subset $PD \subseteq V$ with profit $\mathfrak{p}_{\text{PD}} = |V| - k$. If  $\textit{PD}$ is a dominating set, then 
%If $V_{\text{PD}}(G, \textit{PD}) = V$, then 
$|\textit{PD}| = k$ and thus $\textit{PD}$ is a dominating set of size $k$. 

If instead $\textit{PD}$ is not a dominating set then 
%$V_{\text{PD}}(G, \textit{PD}) \neq V$, 
%not all vertices in $G$ are dominated by the vertices in $\textit{PD}$. In this case 
we can obtain a dominating set of size at most $|V| - {\mathfrak p}_{\text{PD}}$ by applying Algorithm~\ref{alg:add_vertices}. Note that the algorithm at no step when expanding $\textit{PD}$ reduces the profit.~\end{proof}

% covering the remaining edges in $V\setminus V_{\text{PD}}(G, \textit{PD})$ as follows. 

% (*) Pick a vertex $v \in V\setminus V_{\text{PD}}(G, \textit{PD})$. To dominate vertex $v$, add $v$, to $\textit{PD}$. 

% Note that the profit of the update set $\textit{PD}$ remains ${\mathfrak p}_{\text{PD}}$ since one more vertex, $v$, is dominated by itself which does not reduce the profit. 

% We repeat (*) as long as $V\setminus V_{\text{PD}}(G, \textit{PD})\not=\emptyset$.

\vspace*{-10pt}
\begin{algorithm}
\caption{Converting solutions: \textit{\textsc{MaxPD}}  to \textit{\textsc{MinDS}} adapted from~\cite{fernau2019profit}}
\label{alg:add_vertices}
\begin{algorithmic}

\REQUIRE Graph $G$, Solution $\textit{PD}$
\STATE Let $V_{\text{nd}}$ be the set of non-dominated vertices in $G$
\FOR{each vertex $v \in V_{\text{nd}}$}
    \IF{$N(v)  \not\subseteq  \textit{PD}$ and $v \notin \textit{PD}$}
        \STATE Add $v$ to $\textit{PD}$
    \ENDIF
\ENDFOR
\end{algorithmic}
\end{algorithm}

\noindent Theorem~\ref{thm:equiv-ds-pd} implies that, given an optimal solution to \textsc{MaxPD} $\textit{PD}$ for a graph $G$, we can obtain a minimum dominating set for $G$ using Algorithm \ref{alg:add_vertices}.
%the procedure given in the proof.  

Having obtained the unconstrained SCOOP twin \textsc{MaxPD} for \textsc{MinDS} we are ready to describe the formulations of the cost Hamiltonians. 

\subsection*{Cost Hamiltonians for \textit{\textsc{MinDS}} and \textit{\textsc{MaxPD}}} 
We describe cost Hamiltonians for both \textit{\textsc{MinDS}} ($\hat{H}_{DS}$) and \textit{\textsc{MaxPD}} ($\hat{H}_{PD}$) using higher-order unconstrained binary formulations. Let $\vec{x} = (x_1, x_2, \dots, x_{|V|})$ be a vector of binary variables, where $x_i \in \{0, 1\}$ for each $i \in V$, then,
\vspace*{-2pt}
\begin{align*}
    \hat{H}_{DS}(\vec{x}) &=  \hat{H}^{DS}_{N}(\vec{x}) + \hat{H}^{DS}_{V}(\vec{x}) \\
    &= A\sum_{i \in V} \Bigg[\Big((1-x_i)\prod_{j\in N(i)} (1-x_j)\Big)\Bigg]+B\sum_{v \in V} x_v,
\end{align*}
% \color{red}
% I'm a bit concerned about this style - with the $Z_i$ variables - it might end up needing more space and set cover is a larger Hamiltonian (The hamiltonian terms can also use $x_i$ like Lucas does)
% \begin{align}
%     \hat{H}_{DS} &=  \hat{H}^{DS}_{N} + \hat{H}^{DS}_{V} \\
%     &= A\sum_{i \in V} \Bigg[\Bigg(\Big(\frac{\mathds{1}+Z_i}{2}\Big)\prod_{j\in N(i)} \Big(\frac{\mathds{1}+Z_j}{2}\Big)\Bigg)\Bigg]+B\sum_{v \in V} \Big(\frac{\mathds{1} +Z_v}{2}\Big),
% \end{align}
% \color{black}
with, $A>B$, and
\vspace*{-2pt}
\begin{align*}
    \hat{H}_{PD}(\vec{x}) &=  \hat{H}^{PD}_{N}(\vec{x}) - \hat{H}^{PD}_{V}(\vec{x}) \\
    &=  \sum_{i \in V} \Bigg[1 - \Big((1-x_i)\prod_{j\in N(i)} (1-x_j)\Big)\Bigg] - \sum_{v \in V} x_v.
\end{align*}

We formulated the HUBO of the constrained COP \textit{\textsc{MinDS}} following the guidelines of Glover et al.~\cite{glover2022quantum}. The goal of this HUBO is to minimize $\hat{H}_{DS}(\vec{x})$. $\hat{H}_{DS}(\vec{x})$ imposes a penalty for vertices that are not dominated by a factor of $A$. Its minimum value is   $\hat{H}_{DS}(\vec{x}) = B \sum_{v \in V} x_v$. 

For the unconstrained COP \textsc{MaxPD}, the goal is to maximize the unconstrained objective $\hat{H}_{PD}(\vec{x})$. The term $\hat{H}^{PD}_{N}(\vec{x})$ keeps track of all vertices that are dominated---the contribution of each vertex is $1$ if dominated and $0$ otherwise.

We point out that Dinneen et al.~\cite{dineen2017} provide a different, quadratic, formulation for \textsc{MinDS}, which requires additional variables/qubits to balance penalties where more than one vertex is selected in the dominating set.

\section{Minimum Maximal Matching}
\label{sec:min-max}
% A \textit{matching} in a graph $G = (V, E)$ is a subset of edges $M \subseteq E$ such that no two edges in $M$ share a vertex. 

% A \textit{maximal matching} $M$  is a matching that cannot be extended by adding any other edge from $E$ without violating the matching property. Formally, $M$ is maximal if, for every edge $ (u, v) \in E \setminus M$, $M\cup (u,v)$ is not a matching in $G$.
 %adding $(u,v)$ to $M$ would result in $u$ or $v$ being incident to at least one edge in $M$.  

%\subsection{\textit{\textsc{Minimum Maximal Matching}}}
% A \textit{minimum maximal matching} is a maximal matching with the least possible number of edges.  
% %while still covering as many vertices as required for maximality. 
% That is, if $\mathcal{M}$ is the set of all maximal matchings in $G$, a minimum maximal matching $M^*$ satisfies 
% $$
% |M^*| = \min_{M \in \mathcal{M}} |M|.
% $$

Recall the problem \textit{\textsc{Minimum Maximal Matching (M$^3$)}} introduced at the end of Sec.~\ref{sec:profit-framework}, where we derive, as its SCOOP twin, the unconstrained problem  \textit{\textsc{Maximum Profitable Edge Set (MaxPES)}}. 

% The decision version of \textit{\textsc{M$^3$}} is known to be NP-complete, even when restricted to planar or bipartite graphs of maximum degree 3 \cite{yannakakis1980edge}.

% For a given graph, a solution to above problem is a subset of edges that must satisfy the constraint to be a matching. Furthermore, such matching also must satisfy the constraint to be maximal.

% \color{red}
% I suggest we put here the profit version of the problem. Following the framework we will be introducing in Section \ref{sec:profit-framework}. 

% Then prove equivalence and add the postprocessing algorithm in to the proof. I think that's what you have sort of below. I'll continue reading :). 

% \color{black}

The profit formulation \textsc{MaxPES} of \textit{\textsc{M$^3$}} is derived by relaxing the two constraints of  $\textsc{M}^3$: we neither require the set of edges to be a matching nor maximal. 
Using the SCOOP framework, this process can by summarized as follows. 
%
% \subsection{\textit{\textsc{Maximum Profitable Edge Set}}}
%\textcolor{red}{change this section to remove the edge dominating set - make it gain = Edges that are in the edge set or it shares a vertex with a  to the edge in the edge set}
%The profit formulation of the \textit{\textsc{Minimum Maximal Matching}} problem is derived by relaxing the two constraints of the $M^3$ problem: we no longer require the set of edges to be matching or maximal. We define the profitable edge set using the SCOOP framework as follows:

%\color{blue}
\noindent \textbf{Step 1: Identify constraints} \\
The constraints that any solution $M\subseteq E$ to \textit{\textsc{M$^3$}} must satisfy to ensure feasibility is that $M$ is a matching that is maximal.

\noindent \textbf{Step 2: Quantify feasibility of infeasible solution} \\
Consider the \textit{extent} to which a subset \textit{covers edges}. An edge is covered by itself or is adjacent to an edge in the subset.   

\noindent \textbf{Step 3: Develop objective function $\texttt{obj}_{\text{PES}}$}\\
For a subset $\textit{PES}\subseteq E$ in $G$,\\
$
\texttt{obj}^G_{\text{PES}}(\textit{PES}) = |\{e \in E|~e~\text{is covered by}~ \textit{PES}\}| - |\textit{PES}|
$. We refer to the value produced by the objective function $\texttt{obj}^G_{\text{PES}}(\textit{PES})$ as \textit{profit}, $\mathfrak{p}_{\text{PES}}$.

\noindent \textbf{Step 4: Objective function compatibility} \\
From Theorem~\ref{thm:equiv-mmm-eds} we can derive function $f_G(.)$. 
%due to the equivalence between  \textit{\textsc{MaxPD}} and  \textit{\textsc{MinDS}}. 
%That is, Theorem~\ref{thm:equiv-mmm-eds} shows the equivalence between  \textit{\textsc{MaxPES}} and  \textit{\textsc{$M^3$}}. 
Here, $f_G(\mathfrak{p}) = |E| - k$ and $f_G^{-1}(k) = |E| - \mathfrak{p}$. 

\noindent \textbf{Step 5: Solution enhance-ability}\\
The proof of Theorem~\ref{thm:equiv-mmm-eds} shows a two-step process  that converts any solution $\textit{PES} \subseteq E$ for a given \textit{\textsc{MaxPES}}-instance $G=(V,E)$ to a maximal matching, and therefore to a solution for \textit{\textsc{M$^3$}} while preserving solution quality.  Alg.~\ref{alg:add_edges} describes the first step of the polynomial-time post-processing procedure, i.e.~converting $\textit{PES}$ 
%to convert a solution $\textit{PES} \subseteq E$ for an \textit{\textsc{MaxPES}}-instance $G=(V,E)$  
to a \textit{maximal} subset of edges---a subset $\textit{PES}'$ where the addition of any additional edge from $E$ would reduce the profit--while preserving solution quality. Alg.~\ref{alg:eds_to_mm} describes the second step that converts  $\textit{PES}'$ to a \textit{maximal} matching, again while preserving solution quality. 

\smallskip

\color{black}
% we consider the \textit{extent} to which this subset of edges \textit{covers} itself and its adjacent edges.
%and its \textit{closeness} to being an edge dominating set (i.e., the more edges dominated, the better), then we obtain the graph problem {\sc Maximum Profitable Edge Set}. 

% For a subset $\textit{PES}\subseteq E$ in $G$, the number of edges covered by $\textit{PES}$, is the \textit{gain}, and the number of edges invested, i.e., the size of $\textit{PES}$, is considered the \textit{loss}. The profit of $\textit{PES} \subseteq V$ for a graph $G=(V,E)$ is then  defined as  
% $\textit{profit} = \text{gain} - \text{loss}$.

% We define the problem \textit{\textsc{MaxPES}} for a graph $G=(V, E)$ where $\textit{PES} \subseteq V$.

% Maximize: profit $\mathfrak p_{\text{PES}}$, where 
% \begin{align*}
% {\mathfrak p}_{\text{PES}} = |E_{\text{PES}}(G, \textit{PES})| - |\textit{PES}|
% \end{align*}

% \noindent Here, $E_{\text{PES}}(G, \textit{PES})$ represents edges in  $G$ covered by $\textit{PES}$:
% \begin{align*}
% E_{\text{PES}}(G, \textit{PES}) = \{i \in E,  j \in N_e(i): i \in \textit{PES} \vee j \in \textit{PES}\}
% \end{align*}

\noindent We say that a subset that covers all edges is a \textit{maximal profitable edge set} and a subset $\textit{PES}\subseteq E$ that maximizes $\texttt{obj}^G_{\text{PES}}(\textit{PES})$ is a \textit{maximum profitable edge set}. \\

\subsubsection*{Determining minimum maximal matchings via maximum profitable edge sets}
%\textcolor{blue}{Ulrike: This subsubsection is ready to be proofread by you. Please judge proof and algorithm critically :) }
The following theorem implies that we can derive, from any given subset of edges with profit $\mathfrak p$,  a maximal matching $\textit{MM}$ with profit at least $\mathfrak p$.

\begin{thm}\label{thm:equiv-mmm-eds} Let $G =(V,E)$  be an undirected simple graph.
$G$ has a maximal matching $\textit{MM}\subseteq E$ of size $k$ if and only if $G$ has a subset $\textit{PES}\subseteq E$ with profit ${\mathfrak p}_{\text{PES}}=|E| - k$. 
\end{thm}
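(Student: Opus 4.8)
The plan is to prove both directions of the biconditional, mirroring the structure of Theorem~\ref{thm:equiv-ds-pd}, and relying on the two post-processing algorithms (Alg.~\ref{alg:add_edges} and Alg.~\ref{alg:eds_to_mm}) to supply the constructive content of the harder direction. Throughout, I would use the key identity that for any subset $E' \subseteq E$, the profit is $\texttt{obj}^G_{\text{PES}}(E') = c(E') - |E'|$, where $c(E')$ denotes the number of edges covered by $E'$. The forward direction ($\Rightarrow$) is the easy one: given a maximal matching $\textit{MM}$ of size $k$, I would observe that because $\textit{MM}$ is maximal, it covers \emph{every} edge of $E$ (any uncovered edge could be added to $\textit{MM}$, contradicting maximality). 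Hence $c(\textit{MM}) = |E|$, and the profit is exactly $|E| - k$, as claimed.

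The reverse direction ($\Leftarrow$) is where the work lies. Starting from an arbitrary subset $\textit{PES} \subseteq E$ with profit $\mathfrak{p}_{\text{PES}} = |E| - k$, I would proceed in two stages, invoking the two algorithms referenced in Step~5. First, I would apply Alg.~\ref{alg:add_edges} to convert $\textit{PES}$ into a \emph{maximal} profitable edge set $\textit{PES}'$ (one covering all edges) without decreasing the profit; the point is that adding an edge that covers at least one previously uncovered edge increases $c(\cdot)$ by at least as much as it increases $|\cdot|$, so each such addition is profit-nondecreasing. After this stage $c(\textit{PES}') = |E|$, so the profit equals $|E| - |\textit{PES}'|$, whence $|\textit{PES}'| \le k$. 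Second, I would apply Alg.~\ref{alg:eds_to_mm} to transform $\textit{PES}'$ into a matching while maintaining edge coverage and not increasing cardinality: whenever two edges in the current set share an endpoint, one can be removed, and the removed edge remains covered by its neighbor, so full coverage (hence maximality of the resulting matching) is preserved and the size does not grow. This yields a maximal matching $\textit{MM}$ with $|\textit{MM}| \le |\textit{PES}'| \le k$, and therefore profit at least $|E| - k$.

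To close the reverse direction and land exactly at size $k$, I would combine this with the forward direction: the maximal matching $\textit{MM}$ just produced has profit $|E| - |\textit{MM}| \ge |E| - k = \mathfrak{p}_{\text{PES}}$, so $|\textit{MM}| \le k$; and since $\textit{PES}$ was assumed to realize the profit value $|E| - k$, an extremal/optimality argument (or the statement's intended reading that $k$ is the relevant matching size) pins the value down. Concretely, I would phrase the theorem as asserting the correspondence between the optimal values, so that a minimum maximal matching of size $k$ corresponds to a maximum profitable edge set of profit $|E| - k$, and the two monotone constructions above show neither optimum can beat the other.

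The main obstacle I anticipate is the correctness of the second stage, Alg.~\ref{alg:eds_to_mm}: I must argue carefully that removing edges to destroy adjacencies never uncovers an edge. The subtle case is an edge $e$ that is covered \emph{only} through a removed edge $e'$; I would need to verify that $e'$ itself remains covered by the neighbor $e''$ that justified its removal, and that $e$, being adjacent to $e'$, shares an endpoint with the retained edge $e''$ as well, or is otherwise still covered. Establishing this coverage-preservation invariant rigorously—rather than the profit bookkeeping, which is routine—is the delicate part, and it is precisely what legitimizes the inequality chain $|\textit{MM}| \le |\textit{PES}'| \le k$ that drives the proof.
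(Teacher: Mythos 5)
Your proof takes essentially the same route as the paper's: the forward direction argues that maximality forces full edge coverage (hence profit $|E|-k$), and the reverse direction applies Alg.~\ref{alg:add_edges} to reach a maximal profitable edge set of size at most $k$ and then Alg.~\ref{alg:eds_to_mm} to prune it to a maximal matching without losing profit. Your explicit handling of the coverage-preservation invariant in the pruning stage, and of the ``size at most $k$'' versus ``size exactly $k$'' reading of the statement, is more careful than the paper's sketch, but the underlying argument is identical.
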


\begin{proof} (Sketch)
\item $(\Rightarrow)$ 
Consider a maximal matching $\textit{MM} \subseteq E$ for $G$, $|\textit{MM}| = k$. $\textit{MM}$ has a profit of ${\mathfrak p}_{\text{PES}}=|E_{\text{PES}}(G,\textit{MM})| - |\textit{MM}|$, where $E_{PES}(G, \textit{MM})$ are the edges covered by $\textit{MM}$. Since  $\textit{MM}$ is a maximal matching, it covers all edges in $G$. Therefore, $|E_{\text{PES}}(G,\textit{MM})| = |E|$ and ${\mathfrak p}_{\text{PES}}=|E| - |k|$.
%, it is also an edge dominating set~\cite{yannakakis1980edge} of size $k$ that

\item $(\Leftarrow)$ 
Let subset $\textit{PES} \subseteq E$ with ${\mathfrak p}_{\text{PES}} = |E| - k$.
For the case where $\textit{PES}$ is a maximal matching, there is nothing to prove.

If $\textit{PES}$ is not  a maximal matching and $E_{\text{PES}}(G, \textit{PES}) = E$, then $|\textit{PES}|=k$, and $\textit{PES}$ covers all edges,  which means that $\textit{PES}$ is a maximal profitable edge set.
%is an edge dominating set of size $k$. 

If $\textit{PES}$ is neither a maximal matching nor a maximal profitable edge set, then $\textit{PES}$ does not cover all edges. One can obtain a maximal profitable edge set $\textit{PES}'$ of size at most $|E| - {\mathfrak p}_{\text{PES}}$ and profit at least ${\mathfrak p}_{\text{PES}}$  by covering  the remaining edges in $E\setminus E_{\text{PES}}(G, \textit{PES})$ with  post-processing Alg.~\ref{alg:add_edges}. 

\begin{algorithm}
\caption{Converting solutions: profitable edge sets to maximal profitable edge sets}
\label{alg:add_edges}
\begin{algorithmic}

\REQUIRE Graph $G$, solution $\textit{PES}$
\STATE Let $\textit{PES}'= \textit{PES}$
\STATE Let $E_{\text{unc}}$ be the set of uncovered edges in $G$
\FOR{each edge $(u,v) \in E_{\text{unc}}$}
    \IF{$N_e(u,v) \not\subseteq \textit{PES}$ and $(u, v) \notin \textit{PES}$}
        \STATE Add $(u, v)$ to $\textit{PES}'$
    \ENDIF
\ENDFOR
\end{algorithmic}
\end{algorithm}

%(*) Pick an edge $uv \in E\setminus E_{\text{PES}}(G, \textit{PES})$. To dominate an edge $uv$, add it to $\textit{PES}$.  Here, the profit of the updated set $E\setminus E_{\text{PES}} \cup  uv$ remains ${\mathfrak p}_{\text{PES}}$ since one more edge $uv$ is covered only by itself. 

%We repeat (*) as long as $E\setminus E_{\text{PES}}(G, \textit{PES})\not=\emptyset$.

%Note that a maximal profitable edge set need not be a maximal matching. 

Finally, if the maximal profitable edges set $\textit{PES}'$ is not a matching, it can be converted in polynomial time into a maximal matching $\textit{MM}$  with $|MM| \leq |PES'|$ (Alg.~\ref{alg:eds_to_mm}). 

\begin{algorithm}
\caption{Converting a maximal profitable edge set into a maximal matching; adapted from~\cite{yannakakis1980edge}}
\label{alg:eds_to_mm}
\begin{algorithmic}
\REQUIRE Graph $G$, maximal profitable edge set $\textit{PES}'$
\FOR{each pair of adjacent edges $(u,v), (v,w) \in \textit{PES}'$}
\IF{$\textit{PES}' \setminus (u,v)$ covers all edges}
    \STATE{Remove $(u,v)$ from $\textit{PES}'$ and break}
\ENDIF    
\IF{$\textit{PES}' \setminus (v,w)$ covers all edges}
    \STATE{Remove $(v,w)$ from $\textit{PES}'$ and break}
\ELSE
\STATE Let $S$ be the set of edges incident to $w$
\STATE Pick an edge $(w, z) \in S$, $z\not=v$, such that $(w,z)$ is covered only by $(v, w)$%{$(w,z)$ exists due to maximality}
\STATE $\textit{PES}' := (\textit{PES}' \setminus (v, w)) \cup (w, z)$
\ENDIF
\ENDFOR
\end{algorithmic}
\end{algorithm}

% In the case where $E_{\text{PES}}(G, \textit{PES}) \not= E$, not all edges have been covered by $\textit{PES}$, one can obtain a maximal profitable edge set of size at most $|E| - {\mathfrak p}_{\text{PES}}$ by covering  the remaining edges in $E\setminus E_{\text{PES}}(G, \textit{PES})$ with the following post-processing. 

% (*) Pick an edge $uv \in E\setminus E_{\text{PES}}(G, \textit{PES})$. To dominate an edge $uv$, add it to $\textit{PES}$.  Here, the profit of the updated set $E\setminus E_{\text{PES}} \cup  uv$ remains ${\mathfrak p}_{\text{PES}}$ since one more edge $uv$ is covered only by itself. 

% We repeat (*) as long as $E\setminus E_{\text{PES}}(G, \textit{PES})\not=\emptyset$.

%Any maximal profitable edge set is also an edge dominating set  and can be converted to a maximal matching of size  $|MM|\leq|EDS|$ using~\cite{yannakakis1980edge}. 
\end{proof}
\color{black}
\noindent Theorem~\ref{thm:equiv-mmm-eds} implies that, given a maximum profitable edge set $\textit{PES}$ for a graph $G$, we can obtain a minimum maximal matching for $G$ using the procedure given in the proof.  
Generally, to convert any solution to instances for \textit{\textsc{MaxPES}} to maximal matchings for the same instance without sacrificing the solution quality, we apply the 2-step procedure. 

We would like to point out a connection between the problems investigated in the section with to two other famous NP-hard problems. In 1980, Yannakakis and Garvil~\cite{yannakakis1980edge} pointed out the transformational equivalence between the problems \textit{\textsc{M}}$^3$ and \textit{\textsc{Minimum Edge Dominating Set (MinEDS)}}, the problem that for an undirected graph  seeks a smallest subset of edges that covers every edge in the graph.   When following the SCOOP framework one can see that the constrained problem M$^3$ and Minimum Edge Dominating Set share unconstrained SCOOP twin \textit{\textsc{MaxPES}}. However, to convert solutions for \textit{\textsc{MaxPES}} into solutions for \textit{\textsc{MinEDS}}, Alg.~\ref{alg:add_edges} is sufficient. Another problem investigated in this paper is \textit{\textsc{Minimum Independent Edge Dominating Set (MinIEDS)}}. An \textit{independent edge dominating} set is a dominating set that is also a matching. While there is no difference between a minimum independent edge dominating set and a minimum maximal matching, not every edge dominating set is also a maximal matching. Similarly, not every maximal matching is also an edge dominating set. \textit{\textsc{MaxPES}}, however, again serves a SCOOP twin for \textit{\textsc{MinIEDS}} as any  subset of edges of profit $\mathfrak p$ can be converted into a matching of profit at least $\mathfrak p$. 

% \subsection{\textit{Relationship between \textit{\textsc{Minimum Maximal Matching}} and \textsc{Minimum Independent Edge Dominating Set}}}
% \textcolor{blue}{needs reworking}
% An \textit{edge dominating set} in an undirected graph $G = (V, E)$ is a subset $\textit{EDS} \in V$  such that every edge in $G$ is ``dominated" by $\textit{EDS}$. An edge, $(u, v)$, is considered to dominate itself and all edges that share an endpoint with it, ensuring that all edges in the graph are either part of the dominating set or directly connected to a edge in it. The smallest possible edge dominating set is referred to as the \textit{minimum edge dominating set}. An \textit{independent edge dominating set} is a type of edge dominating set where no two edges are adjacent making it a maximal matching. Consequently, a minimum independent edge dominating set is equivalent to a \textit{minimum maximal matching}~\cite{forcade1973smallest}. 

% The size of a minimum edge dominating set equals the size of a minimum maximal matching. Although a minimum edge dominating set is not always a minimum maximal matching, it can be transformed into one in polynomial time~\cite{harary2018graph, yannakakis1980edge}. 

% Due to this relationship, a profitable edge set can also be converted to an edge dominating set. 

\subsection*{Cost Hamiltonians for \textit{\textsc{MaxPES}} and \textit{\textsc{M$^3$}}} 
We describe the cost Hamiltonians $\hat{H}_{\text{M}^3}$ and $\hat{H}_{\text{PES}}$ for  \textit{\textsc{M$^3$}} and \textit{\textsc{PES}}, respectively below. 

% \begin{align}
%     \hat{H}_{M^3}(\vec{x}) &=  \hat{H}^{M^3}_{N}(\vec{x}) + \hat{H}^{M^3}_{V}(\vec{x}) \\
%     &= A\sum_{i \in E} \Bigg[\Big((1-x_i)\prod_{j\in N_e(i)} (1-x_j)\Big)\Bigg]+B\sum_{i \in E} x_i,
% \end{align}

% with, $A>B$, and

% The goal is to minimize $\hat{H}_{M^3}$ in the case of \textit{\textsc{$M^3$}}. $\hat{H}_{M^3}$ imposes a penalty for vertices that are not dominated by a factor of $A$ and its minimum value is $\hat{H}_{DS}$ with a value of  $\hat{H}_{DS} = B \sum_{v \in V} x_v$. 
Let $\vec{x} = (x_1, x_2, \dots, x_{|E|})$ be a vector of binary variables, where $x_i \in \{0, 1\}$ for each $i \in E$.

\vspace*{-5pt}

\begin{align*}
    \hat{H}_{PES}(\vec{x}) &=  \hat{H}^{PES}_{N}(\vec{x}) - \hat{H}^{PES}_{V}(\vec{x}) \\
    &=  \sum_{i \in E} \Bigg[1 - \Big((1-x_i)\prod_{j\in N_e(i)} (1-x_j)\Big)\Bigg] - \sum_{i \in E} x_i.
\end{align*}

\noindent For \textsc{MaxPES}, the goal is to maximize the unconstrained objective $\hat{H}_{PES}(\vec{x})$. The term $\hat{H}^{PES}_{N}(\vec{x})$ keeps track of all edges that are covered (the contribution of each edge that is covered is $1$ and $0$ otherwise).

In 2014, Lucas~\cite{Lucas2014} provided the QUBO formulation for \textsc{M$^3$}. We include it here for comparison with the \textsc{MaxPES} formulation.
Let $\vec{x} = (x_1, x_2, \dots, x_{|E|})$ be a vector of binary variables, where $x_e \in \{0, 1\}$ for each $e \in E$ representing whether or not edge $e$ is in the matching. The Hamiltonian $\hat{H}_{M^3}(\vec{x}, \vec{y})$ is defined as follows:
\begin{equation*}
\hat{H}_{M^3}(\vec{x}, \vec{y}) =\hat{H}^{M^3}_A(\vec{x}) + \hat{H}^{M^3}_B(\vec{y}) + \hat{H}^{M^3}_C(\vec{x}),
\end{equation*}
where:
\begin{equation*}
\hat{H}^{M^3}_A(\vec{x}) = A\sum_{v \in V} \sum_{\lbrace e_1,e_2\rbrace\subset \partial v} x_{e_1}x_{e_2}
\end{equation*}
penalizes having two matched edges incident to the same vertex $v$, where $\partial v$ is the set of edges incident to $v$. This term enforces the matching constraint.

Auxiliary binary variables, $\vec{y}$, are defined such that $y_v = \sum_{e\in\partial v} x_e$ (only valid for states with $H_A = 0$) which indicates if a vertex $v$ has a matched edge. Then,
\begin{equation*}
\hat{H}^{M^3}_B(\vec{y}) = B\sum_{e=(u,v) \in E} (1-y_u)(1-y_v) 
\end{equation*}

\noindent penalizes states where an edge $(u,v)$ could be added to the matching (i.e., both $y_u = 0$ and $y_v = 0$) without violating the matching constraint, thus enforcing maximality. Finally, since
\begin{equation*}
\hat{H}^{M^3}_C(\vec{x}) = C\sum_{e \in E} x_e
\end{equation*}
counts the number of matched edges, the optimal value of the total Hamiltonian corresponds to a minimum maximal matching. Lucas highlights the absence of prior work on \textsc{M}$^3$; our current analysis confirms that it remains unaddressed in the literature from the perspective of QAOA or QA.
\section{Minimum Set Cover}
\label{sec:setCover}
In the previous sections, we applied our SCOOP framework to two different graph problems. Now, we apply the SCOOP framework to a different type of constrained COP, {\textsc{Minimum Set Cover}.

Given a universe ${U}$ consisting of $n$ elements and a family $\cal C$ consisting of $m$ subsets  of $U$ whose union equals $U$, a \textit{set cover} is a subfamily ${\cal C'}\subseteq \cal C$ where the union of elements in $\cal C'$ equals ${U}$.
%{\color{red}TODO: Applications.\color{blue}  I think applications are currently in background}
%\subsection{Minimum Set Cover}
We begin by defining the problem and before applying our SCOOP framework to derive and solve the problem via its SCOOP twin.
%\subsubsection{Optimization Version}

We define the optimization version of the problem \textit{\textsc{Minimum Set Cover (MinSC)}} for a universe ${U}$, a family of subsets ${\cal C}\subseteq {U}$, and a subfamily ${\cal C'}\subseteq {\cal C}$ as follows:
\begin{align*}
    & \text{Minimize:~} |{\cal C'}|\\
    & \text{Subject to:~} \bigcup\limits_{C_i \in {\cal C'}} C_i = {U}
\end{align*}
%The decision version of the set cover problem is NP-complete \cite{Karp1972}.
\noindent \textbf{Step 1: Identify constraints} \\
The constraints that any solution ${\cal C'}\subseteq {\cal C}$ to \textit{\textsc{MinSC}} must satisfy to ensure feasibility is that ${\cal C'}$ is a set cover for $U$, i.e., $\bigcup\limits_{C_i \in {\cal C'}} C_i = {U}$.

\noindent \textbf{Step 2: Quantify feasibility of infeasible solution} \\
Consider the \textit{extent} to which a subfamily of ${\cal C}$ \textit{covers the elements of } $U$.

\noindent \textbf{Step 3: Develop objective function $\texttt{obj}^{(U,{\cal C})}_{\text{PSC}}$}\\
For a subfamily ${\cal PSC}\subseteq {\cal C}$,\\

$\texttt{obj}^{(U, {\cal C})}_{\text{PSC}}(\mathcal{PSC}) = |\bigcup\limits_{C_i \in {\cal PSC}} C_i| - |\mathcal{PSC}|$. 
We refer to the value produced by the objective function $\mathfrak{p}_{\text{PSC}} = \texttt{obj}^{(U, {\cal C})}_{\text{PSC}}(\mathcal{PSC})$ as \textit{profit}.

\noindent \textbf{Step 4: Objective function compatibility} \\
From Theorem~\ref{thm:sc} we can derive function $f_U(.)$. 
%due to the equivalence between  \textit{\textsc{MaxPD}} and  \textit{\textsc{MinDS}}. 
%That is, Theorem~\ref{thm:equiv-mmm-eds} shows the equivalence between  \textit{\textsc{MaxPES}} and  \textit{\textsc{$M^3$}}. 
Here, $f_U(\mathfrak{p}) = |U| - k$ and $f_U^{-1}(k) = |U| - \mathfrak{p}$. 

\color{black}

    Before moving on to Step 5, we formulate the problem {\sc Maximum Profitable Set Coverage} or \textit{\textsc{MaxPSC}}  for a universe  $U$, a family of subsets $\cal C$ and a subfamily $\mathcal{PSC} \subseteq \cal U$,
\begin{align*}
& \text{Maximize:} \text{~profit~} \mathfrak p_{\text{PSC}}, \text{where~} 
{\mathfrak p}_{\text{PSC}} = \texttt{obj}^{(U, {\cal C})}_{\text{PSC}}( \mathcal{PSC})
 \end{align*}

%  We define the optimization version of the problem MAXIMUM COVER PROFIT for a universe $\mathcal{U}$ and a family of subsets $C$ whose union is $\mathcal{U}$, and set $C'\subseteq C$ as follows:
% \begin{itemize}
%     \item Maximize: $\mathfrak{p} = |\bigcup\limits_{C_i \in C'} C_i| - |C'|$
% \end{itemize}
%\color{blue}

\noindent \textbf{Step 5: Solution enhance-ability}\\
The proof of Theorem~\ref{thm:sc} shows a process  that converts any solution $\mathcal{PSC} \subseteq \cal C$ for a given \textit{\textsc{MaxPSC}}-instance $(U,{\cal C})$ into a set cover while preserving solution quality.  %Alg.~\ref{alg:add_edges} describes the first step of the polynomial-time post-processing procedure, i.e.~converting $\textit{PES}$ 
%to convert a solution $\textit{PES} \subseteq E$ for an \textit{\textsc{MaxPES}}-instance $G=(V,E)$  
%to a \textit{maximal} subset of edges---a subset $\textit{PES}'$ where the addition of any additional edge from $E$ would reduce the profit--while preserving solution quality. Alg.~\ref{alg:eds_to_mm} describes the second step that converts  $\textit{PES}'$ to a \textit{maximal} matching, again while preserving solution quality. 

%\color{blue}
%If we relax the constraint that the union of sets in $C'$ must equal the universe $\mathcal{U}$, we can instead consider the extent to which $C'$ covers $\mathcal{U}$. We want $C'$ to cover as much of $\mathcal{U}$ as possible with $|C'|$ kept small. This leads us to our MAXIMUM COVER PROFIT problem.
% \subsubsection{Optimization Version}
% We define the optimization version of the problem MAXIMUM COVER PROFIT for a universe $\mathcal{U}$ and a family of subsets $C$ whose union is $\mathcal{U}$, and set $C'\subseteq C$ as follows:
% \begin{itemize}
%     \item Maximize: $\mathfrak{p} = |\bigcup\limits_{C_i \in C'} C_i| - |C'|$
% \end{itemize}

%\subsection{Equivalence proofs}
\begin{thm}\label{thm:sc}
For any universe $U$ and family $\cal C$ of subsets of $U$  whose union equals $U$, 
$U$ has a set cover of size $k$ if and only if there is a subfamily $\mathcal{PSC} \subseteq {\cal C}$  
with profit $\mathfrak{p}_{\text{PSC}} = |{U}| - k$.
\label{setCoverToProfitCover}
\end{thm}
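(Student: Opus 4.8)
The plan is to mirror the two preceding equivalence proofs (Theorems~\ref{thm:equiv-ds-pd} and~\ref{thm:equiv-mmm-eds}), exploiting the defining feature of set cover: a subfamily is a set cover exactly when it covers all of $U$, which forces the covered-count term in $\texttt{obj}^{(U,{\cal C})}_{\text{PSC}}$ to equal $|U|$. This makes the bookkeeping between $|\mathcal{PSC}|$ and the profit transparent, just as ``dominates all vertices'' and ``covers all edges'' did in the earlier cases.

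First I would dispatch the forward direction $(\Rightarrow)$. Suppose $\mathcal{C}' \subseteq {\cal C}$ is a set cover with $|\mathcal{C}'| = k$. By definition $\bigcup_{C_i \in \mathcal{C}'} C_i = U$, so the number of covered elements is exactly $|U|$ and hence $\mathfrak{p}_{\text{PSC}} = \texttt{obj}^{(U,{\cal C})}_{\text{PSC}}(\mathcal{C}') = |U| - |\mathcal{C}'| = |U| - k$. For the reverse direction $(\Leftarrow)$, suppose $\mathcal{PSC} \subseteq {\cal C}$ has profit $\mathfrak{p}_{\text{PSC}} = |U| - k$, and split into cases. If $\mathcal{PSC}$ already covers $U$, the covered-count is $|U|$, so $|U| - |\mathcal{PSC}| = |U| - k$ gives $|\mathcal{PSC}| = k$ and $\mathcal{PSC}$ is itself a set cover of size $k$. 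Otherwise some elements remain uncovered; since $\bigcup {\cal C} = U$, each uncovered element lies in some member of ${\cal C}$, so Algorithm~\ref{alg:add_elements} can greedily add sets until all of $U$ is covered, producing a set cover $\mathcal{C}' \supseteq \mathcal{PSC}$.

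The crux, and the step I expect to be the main obstacle, is verifying that this post-processing preserves solution quality, i.e.\ that $\mathfrak{p}_{\text{PSC}}$ never decreases as sets are appended. The key is a local accounting argument: each time the algorithm adds one set to cover a previously uncovered element, the covered-count increases by some $t \ge 1$ while $|\mathcal{PSC}|$ increases by exactly $1$, so the profit changes by $t - 1 \ge 0$. Consequently the resulting set cover $\mathcal{C}'$ satisfies $\texttt{obj}^{(U,{\cal C})}_{\text{PSC}}(\mathcal{C}') \ge |U| - k$; because $\mathcal{C}'$ covers $U$ this reads $|U| - |\mathcal{C}'| \ge |U| - k$, i.e.\ $|\mathcal{C}'| \le k$.

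Finally, to match the statement's phrasing of size exactly $k$, I would observe that any set cover of size at most $k$ can be padded with arbitrary unused members of ${\cal C}$ (monotonicity: supersets of a set cover remain set covers) to obtain one of size exactly $k$. Equivalently, the biconditional is really the correspondence between the optimal values $k^{\ast}$ and $|U| - k^{\ast}$, which is the form in which the result is used downstream (Step~4 already reads off $f_U(\mathfrak{p}) = |U| - k$ and $f_U^{-1}(k) = |U| - \mathfrak{p}$ from this theorem). I do not anticipate any genuine difficulty beyond the profit-monotonicity of the greedy completion, which is the analogue of the ``no step reduces the profit'' remark in the dominating-set proof.
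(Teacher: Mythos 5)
Your proof is correct and takes essentially the same route as the paper: the forward direction is the same direct computation, and the reverse direction uses the same case split (already a cover versus not) followed by completion via Algorithm~\ref{alg:add_elements}. You additionally spell out the profit-monotonicity of each greedy addition (profit change $t-1\ge 0$) and the padding needed to turn ``size at most $k$'' into ``size exactly $k$''---details the paper's sketch leaves implicit.
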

\begin{proof} (Sketch) Given:  $U$,  ${\cal C}$  with $\bigcup\limits_{C_i \in {\cal C}} C_i= U$.
    \item $(\Rightarrow)$ If there is a   set cover  ${\cal C'} \subseteq {\cal C}$ with $|{\cal C'}| = k$,
    then $\bigcup\limits_{C_i \in {\cal C'}} C_i = {U}$. Thus, the profit of ${\cal C'}$ is $\mathfrak{p}_{\text{PSC}} = |\bigcup\limits_{C_i \in {\cal C'}} C_i| - k = |{U}| - k$.
    \item $(\Leftarrow)$ If there is a subfamily $\mathcal{PSC}\subseteq \cal C$  with profit $\mathfrak{p}_{\text{PSC}}$, then  $|\bigcup\limits_{C_i \in {\mathcal {PSC}}} C_i| - |\mathcal {PSC}| \geq \mathfrak{p}_{\text{PSC}}$.
    If $\mathcal {PSC}$ is a  set cover, then $\bigcup\limits_{C_i \in \mathcal {PSC}} C_i = {U}$. Thus, $|U| - |\mathcal {PSC}| \geq \mathfrak{p}_{\text{PSC}}$, yielding $|{U}| - \mathfrak{p}_{\text{PSC}} \geq |\mathcal {PSC}|$. 
 If $\mathcal {PSC}$ is not a minimum set cover, then there exists $ x\in {U}$ such that for all $C_i \subseteq \mathcal {PSC}$, $x \not\in C_i$; thus, $\bigcup\limits_{C_i \in \mathcal {PSC}} C_i \neq {U}$ and $|\bigcup\limits_{C_i \in C'} C_i| < |{U}|$. In this case we can obtain a set cover of size at most $|U| - |\mathcal {PSC}|$ by applying Algorithm~\ref{alg:add_elements}.
    %-----------(removed from proof, added to algo)
    %Since  $\bigcup\limits_{C_i \in {\cal C}} C_i = {U}$, there exists $C_x\in {\cal C}$ with $x \in C_x$. We add $C_x$  to $\mathcal {PSC}$. 
    %When we add $C_x$ to $C'$, we always add at least 1 element (namely $x$) to $\bigcup\limits_{C_i \in \mathcal{PSC}} C_i$ and we add exactly one new element (namely $C_x$) to $\mathcal{PSC}$. Thus, the profit does not decrease. 
    %-----------
    %If $C_x$ is a superset of some $C_i \in \mathcal{PSC}$, then we can remove $C_i$ from $C'$ without decreasing the profit. Why? Because, if $C_x$ is a superset of some $C_i\in C'$, then replacing $C_i$ with $C_x$ will not change the size of $|C'|$, but it may increase the size of $\bigcup\limits_{C_i \in C'} C_i$ since $C_x$ may contain $\geq 1$ $x\in {U}$ that is not in any $C_i\in C'$. 
    %-----------(removed from proof, added to algo)
    % We can repeat this process until $\bigcup\limits_{C_i \in C'} C_i = {U}$. Then we have that $|{U}| - |\mathcal{PSC}| \geq \mathfrak{p}_{\text{PSC}} \Rightarrow |{U}| - \mathfrak{p}_{\text{PSC}} \geq |\mathcal{PSC}|$. Therefore, ${U}$ has a minimum set cover of size $|{U}| - \mathfrak{p}_{\text{PSC}}$.
    %----------------
\end{proof}

The proof of Theorem \ref{thm:sc}, given a maximum profitable set coverage $\mathcal{PSC}$  for a universe ${U}$ and family $\cal C$ of subsets of $U$ such that $\bigcup\limits_{C_i \in {\cal C}} C_i= {U}$, tells us that we can  determine in polynomial time a minimum set cover for ${U}$ using the procedure outlined in the proof above.

\begin{algorithm}
\caption{Converting solutions: \textit{\textsc{MaxPSC}}  to \textit{\textsc{MinSC}}}
\label{alg:add_elements}
\begin{algorithmic}

\REQUIRE Universe $U$, family of subsets $\mathcal{C}$, Solution $\mathcal{PSC}$
\STATE Let $U_{\text{unc}}$ be the set of uncovered elements in $U$
\FOR{each element $x \in U_{\text{unc}}$}
   \FOR{each subfamily $C_i \in \mathcal{C}\setminus \mathcal{PSC}$}
        \IF{$x \in {C_i}$}
            \STATE Add $C_i$ to $\textit{PSC}$
        \ENDIF
    \ENDFOR
\ENDFOR
\end{algorithmic}
\end{algorithm}

\vspace*{-5pt}
\subsection*{Cost Hamiltonians for \textit{\textsc{MinSC}} and \textit{\textsc{MaxPSC}}}

\begin{figure*}[!bthp]
\centering

    \includegraphics[width=1\textwidth]{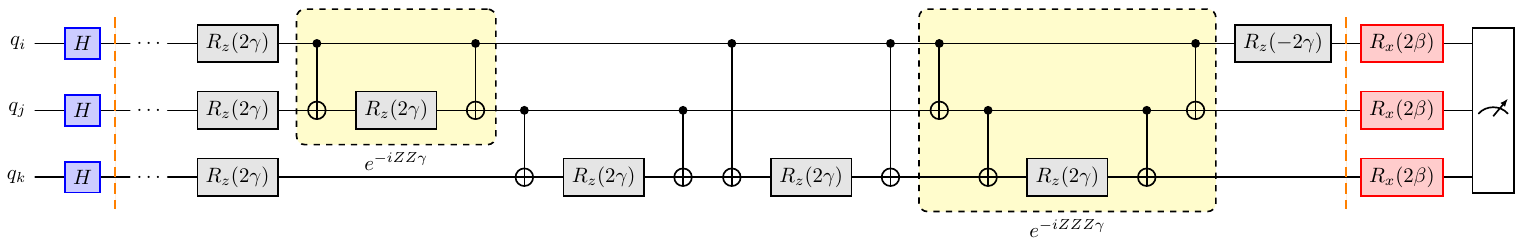}
  \caption{QAOA circuit for \textsc{MaxPD}, with $p=1$ for toy graph $G=(V, E)$, $V=\{i, j, k\}$ and $E=\{(i, j), (j,k)\}$. The circuit shows the terms and interactions relevant to $q_i$. Yellow boxes indicate gates used for quadratic ($ZZ$) and cubic ($ZZZ$) terms.} 
    \label{fig:qaoa-ciruit}
\end{figure*}

Lucas~\cite{Lucas2014} defines a QUBO for \textsc{MinSC} using the following binary variables:

\begin{itemize}
    \item $y_i = \left\{\begin{array}{ll} 
            1 & C_i \in {\cal C}' \\
            0 & C_i \not\in {\cal C}'
            \end{array}\right.$
    \item $x_{\alpha,m} = \left\{\begin{array}{ll} 
            1 & \textnormal{Number of } C_i\textnormal{'s}\in {\cal C} \textnormal{ with } \alpha\in C_i \textnormal{ is } m\geq 1\\
            0 & \textnormal{otherwise}
            \end{array}\right.$
\end{itemize}

\vspace*{-3pt}

The QUBO for \textsc{MinSC} with penalties $A$ and $B$ are then defined on  binary variables $\vec{x}$ and $\vec{y}$ as follows: 
\begin{align*}
    \hat{H}_{SC}(\vec{x},\vec{y}) &= \hat{H}^{SC}_E(\vec{x},\vec{y}) + \hat{H}^{SC}_S(\vec{y}) 
\end{align*}
where
\vspace*{-12pt}
\[\hat{H}^{SC}_E(\vec{x},\vec{y}) = A\sum_{\alpha = 1}^n \left( 1 - \sum_{m=1}^Nx_{\alpha, m} \right)^2 + \] \[A\sum_{\alpha = 1}^n \left( \sum_{m=1}^Nmx_{\alpha, m} - \sum_{i:\alpha\in C_i}y_i\right)^2\]

\noindent and 
\vspace*{-5pt}
\[\hat{H}^{SC}_S(\vec{y}) = B\sum_{i=1}^N{y_i}\]
\vspace*{-13pt}
% where
% \[H_A = A\sum_{\alpha = 1}^n \left( 1 - \sum_{m=1}^Nx_{\alpha, m} \right)^2 + \] \[A\sum_{\alpha = 1}^n \left( \sum_{m=1}^Nmx_{\alpha, m} - \sum_{i:\alpha\in C_i}y_i\right)^2\]

% and 
% \[H_B = B\sum_{i=1}^N{y_i}\]
For \textsc{MinSC}, the QUBO formulation uses \textit{penalties} to enforce the constraint that each element of the universe $U$ is covered. The penalty also activates when the variable $x_{\alpha, m}$ is counted more than once, which is necessary because we aim to count coverage only a single time (and penalize otherwise) even if the element appears in multiple subsets. 

In contrast, we formulate \textsc{MaxPSC} as a HUBO  that naturally captures requirement that each element is counted only once even if it appears in multiple subsets. without the need for penalty terms.
To describe \textsc{MaxPSC} as unconstrained binary optimization problem, we define the a HUBO. We begin with defining the  binary variables $y_i$ expressing whether or not subset $C_i$ is selected into the cover, and $x_{\alpha, i}$ that tells us whether or not element $\alpha\in U\cap C_i$:
\begin{itemize}
    \item $y_i = \left\{\begin{array}{ll} 
            1 & C_i \in {\cal C}' \\
            0 & C_i \not\in {\cal C}'
            \end{array}\right.$
    \item $x_{\alpha, i} = \left\{\begin{array}{ll} 
            1 & \alpha \in U \text{ and } \alpha \in C_i \text{ for } C_i \in {\cal C} \\
            0 & \text{otherwise}
            \end{array}\right.$
\end{itemize}
% Using the binary variables, we define the subset and element cost:
% \begin{itemize}
%     \item \textbf{Subset cost}: \[H_{S}(y) = \sum_{i=1}^{|C|}y_i\]
%     \item \textbf{Element cost}: \[H_{E}(x) = \sum_{\alpha=1}^{|U|}\left(1 - \prod_{i=1}^{|C|}(1-x_{\alpha, i}y_i) \right)\]
% \end{itemize}
\vspace*{-3pt}
The total cost to maximize for \textsc{MaxPSC} is
\begin{align*}
\label{eq:1}
    H_{PSC}(\vec{x},\vec{y}) &= H^{PSC}_{E}(\vec{x}, \vec{y}) - H^{PSC}_{S}(\vec{y}) \\
    &= \sum_{\alpha=1}^{|U|}\left(1 - \prod_{i=1}^{|{\cal C}|}(1-x_{\alpha, i}y_i) \right) - \sum_{i=1}^{|{\cal C}|}y_i
\end{align*} 

 where $H^{PSC}_{S}(\vec{y})$ corresponds to the size of ${\cal C}'$, and $H^{PSC}_{E}(\vec{x}, \vec{y})$ to the number of elements of $U$ that are covered by the chosen subsets.  

\vspace{-3pt}
\section{Experimental Setup}
\label{sec:methodology}

\begin{figure*}[!bthp]
\centering
  \begin{tabular}{@{}ccc@{}}
    \includegraphics[width=0.32\textwidth]{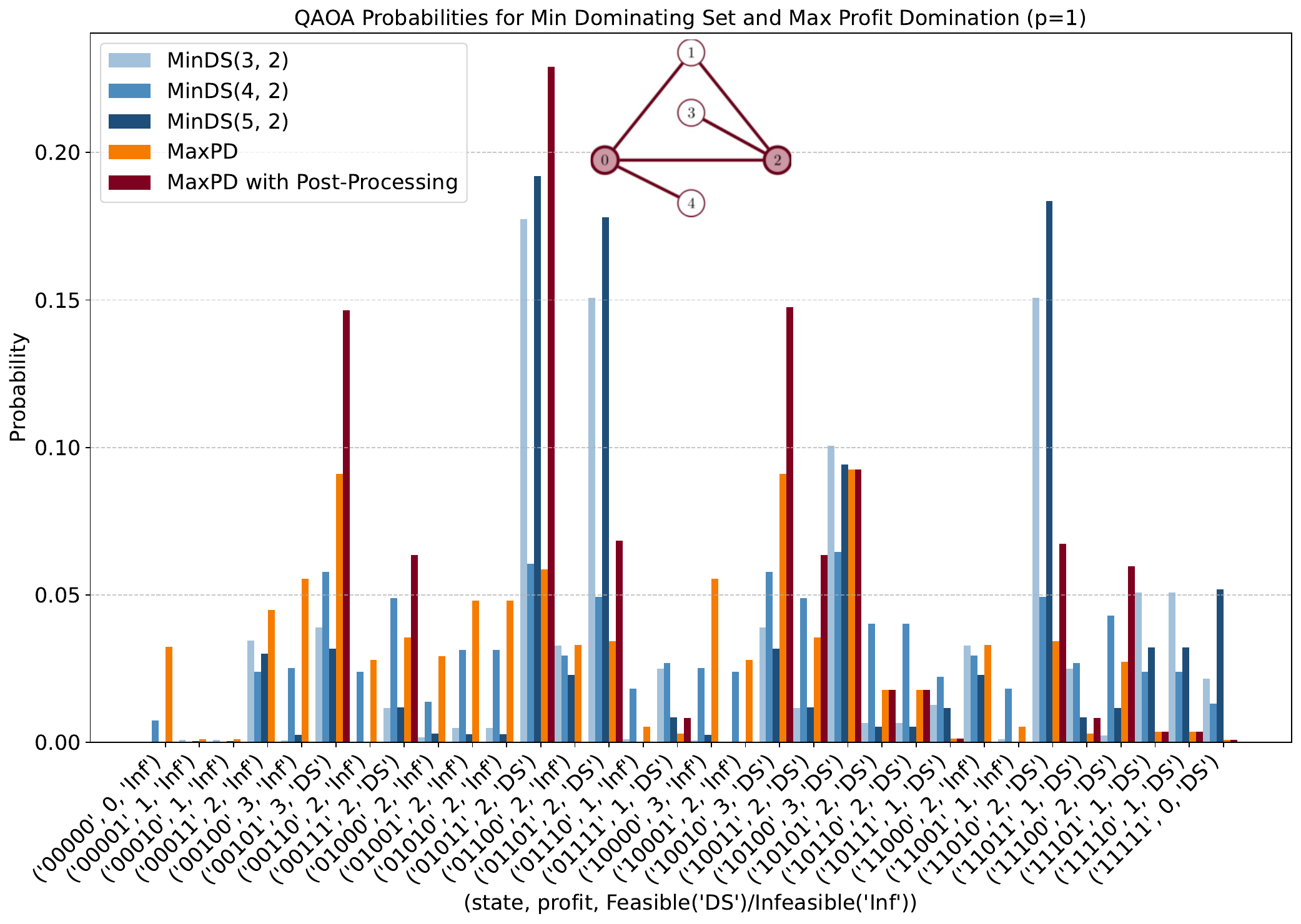} &
  
    \includegraphics[width=0.32\textwidth]{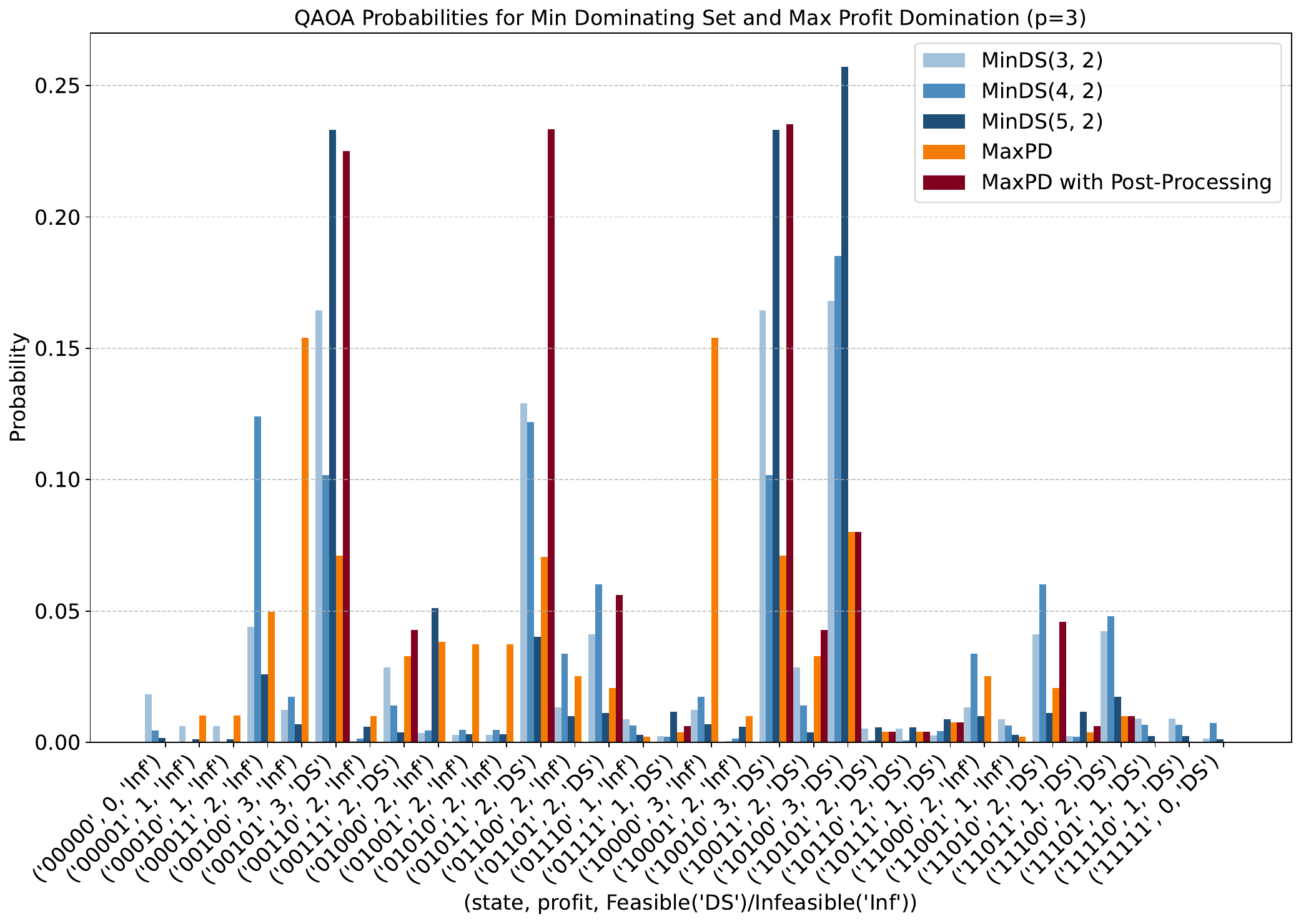} &
 
    \includegraphics[width=0.32\textwidth]{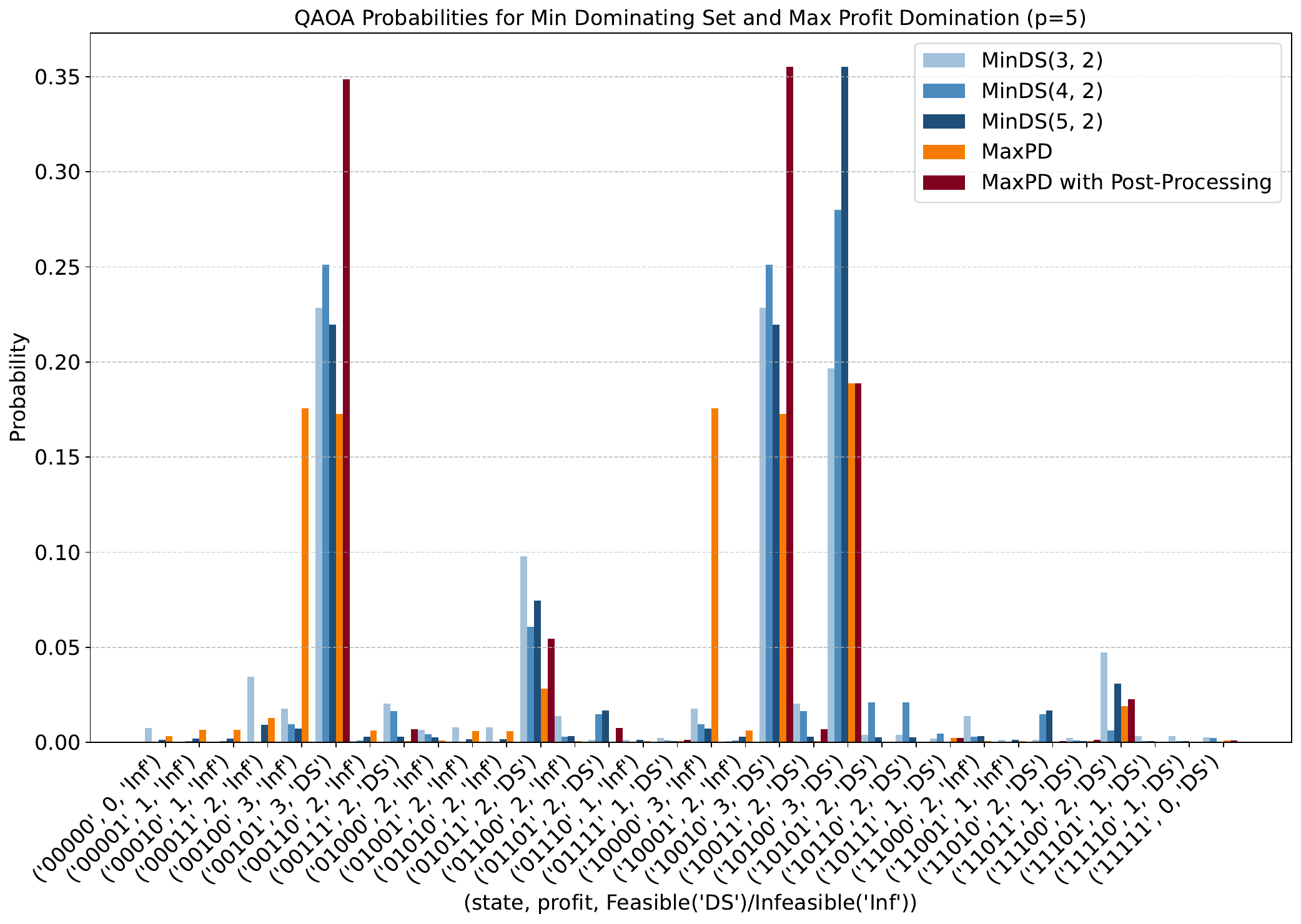}
  \end{tabular}

  \caption{Probabilities for a sample graph (shown in inset of leftmost sub-figure) containing 5 vertices with $p\in \{1,3,5\}$ layers. \textit{\textsc{MinDS}} is run for three different penalty parameters and is shown in blue. For example, Dominating Set(3, 2) refers to penalties $A=3, B=2$. Results of \textit{\textsc{MaxPD}} are shown in orange and the post-processed results of \textit{\textsc{MaxPD}} are shown in burgundy.}
    \label{fig:ds-pp-probs}
\end{figure*}

We demonstrate the feasibility and effectiveness of our approach using the SCOOP twins for constrained COPs \textit{\textsc{MinDS}} and\textit{ \textsc{M$^3$}}. While implementation efforts on \textsc{MinDS} have been limited~\cite{guerrero2020solving, dineen2017}, to the best of our knowledge, \textit{\textsc{M$^3$}} has not yet been implemented on gate-based quantum machines.
% To the best of our knowledge,  both \textsc{MinDS} and \textit{\textsc{M$^3$}} have not been implemented on gate-based quantum machines.
 This section details the experimental setup employed to evaluate the performance of QAOA using the SCOOP framework to compare with the standard penalty-based QAOA approach for solving Higher-Order Unconstrained Binary Optimization (HUBO) problems.

\subsubsection*{Xanadu PennyLane}

All simulations and quantum circuit constructions were implemented using PennyLane, an open-source Python framework specialized for quantum differentiable programming. For this study, the simulations were performed using PennyLane's built-in default qubit simulator with classical optimization performed using the RMSProp (Root Mean Squared Propagation) optimizer over 400 steps.

\subsubsection*{Problem Instances}

To assess the scalability and effectiveness of our approach, we generated random 3-regular graphs with $n=\{6, 8, 10\}$ nodes. These graph instances were chosen with a fixed-degree to limit the number of higher-order interactions. The structure of 3-regular graphs, when applied to the dominating set problem, results in higher-order terms in the HUBO with a maximum degree of four (quartic interactions).

\subsubsection*{QAOA Setup}
We formulate all cost Hamiltonians as minimization problems, where variables in the subset evaluated are mapped to $-1$ and the variables not chosen are mapped to $1$. For each of the cost Hamiltonians, we transform binary variables to Ising ones by applying the transformation $x_i \rightarrow \frac{1 -Z_i}{2}$. We use a basic Pauli-X mixer, $\hat{H}_M = \sum_i X_i$.

Fig.~\ref{fig:qaoa-ciruit} shows a sample QAOA circuit for the problem \textsc{MaxPD} with the terms and interactions shown for one of the nodes ($i$ with $deg(i)=2$). Assessing if this node is dominated in the set requires a Hamiltonian with 8 terms, involving interactions up to three nodes simultaneously.
% \begin{figure*}[!bthp]
% \centering

%     \includegraphics[width=1\textwidth]{figures/qaoa-circuit.pdf}
%   \caption{QAOA circuit for \textsc{MaxPD}, with $p=1$ for a toy graph, $G=(V, E)$, with $V=\{i, j, k\}$ and $E=\{(i, j), (j,k)\}$. The circuit shows the terms and interactions relevant to $q_i$. The yellow boxes indicate the gates used for quadratic ($ZZ$) and cubic ($ZZZ$) terms.} 
%     \label{fig:qaoa-ciruit}
% \end{figure*}

 \subsubsection*{Evaluation metrics}
We performed QAOA simulations with the number of layers up to 8. To validate our proposed approach, we employ the following methodology: 
\begin{enumerate}
    \item A set of random 3-regular connected graphs of varying size are generated. 
    \item The given problem is solved using QAOA on the set of random graphs using our approach (SCOOP framework), as well as the penalty-term approach. 
    \item Results from our approach and results from the penalty-term QAOA are compared against classical exact solutions for varying graph sizes. Exact solutions to \textsc{MinDS} are obtained classically by formulating the problem as an integer linear program and solving it with the PuLP library in Python.
\end{enumerate}

Results are presented for the following evaluation metrics:
\begin{enumerate}
    \item \textbf{Probabilities} (full-statevector results) for the penalty approach (with varying penalties), for the SCOOP framework (before and after post-processing). Fig.~\ref{fig:ds-pp-probs} illustrates this for the dominating set problem using a sample graph with five nodes.
    \item We compare the penalty and SCOOP frameworks using \textbf{summed probabilities of optimal and near-optimal solutions} (we include the second and third best solutions), which represent the likelihood of obtaining high-quality solutions within the problem space in Fig.~\ref{fig:ds-opt-summed-probs}. 
    \item \textbf{Approximation ratios} of  the problems encoded using the SCOOP framework are calculated by dividing the expectation value obtained by QAOA divided by the optimal objective value calculated classically. For instance, 
% the expectation values obtained for problems encoded using the SCOOP framework. The approximation ratio is calculated using:
% \begin{equation}
% \label{eq:ar}
%    r = \frac{|\text{expectation value obtained}|}{|\text{optimal profit}|} 
% \end{equation}
% \end{enumerate}
\begin{equation}
\label{eq:ar}
   r_{PD} = \frac{|\langle \psi(\boldsymbol{\gamma}^*, \boldsymbol{\beta}^*)|\hat{H}_{\text{PD}}|\psi(\boldsymbol{\gamma}^*, \boldsymbol{\beta}^*)\rangle|}{|\text{{MaxPD(G)}}|} 
\end{equation}
where $(\boldsymbol{\gamma}, \boldsymbol{\beta})$ denote the parameters optimized by the classical optimizer for the problem \textsc{MaxPD}. The results, averaged over 10 randomly generated graphs, are presented in Fig.~\ref{fig:approx-ratios-maxpd-maxpes}.
\end{enumerate}

\section{Results and Discussion}
\label{sec:results}

\begin{figure*}[!htbp]
    \centering
    \begin{tabular}{@{}cccc@{}}

    \includegraphics[width=0.33\textwidth]{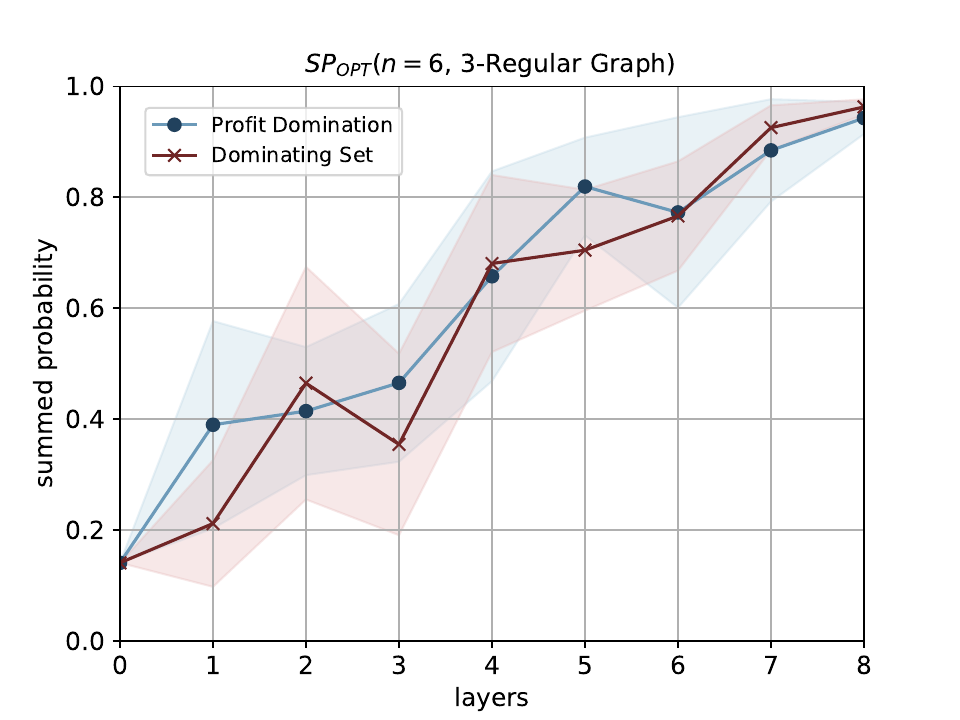} &
    \includegraphics[width=.33\textwidth]{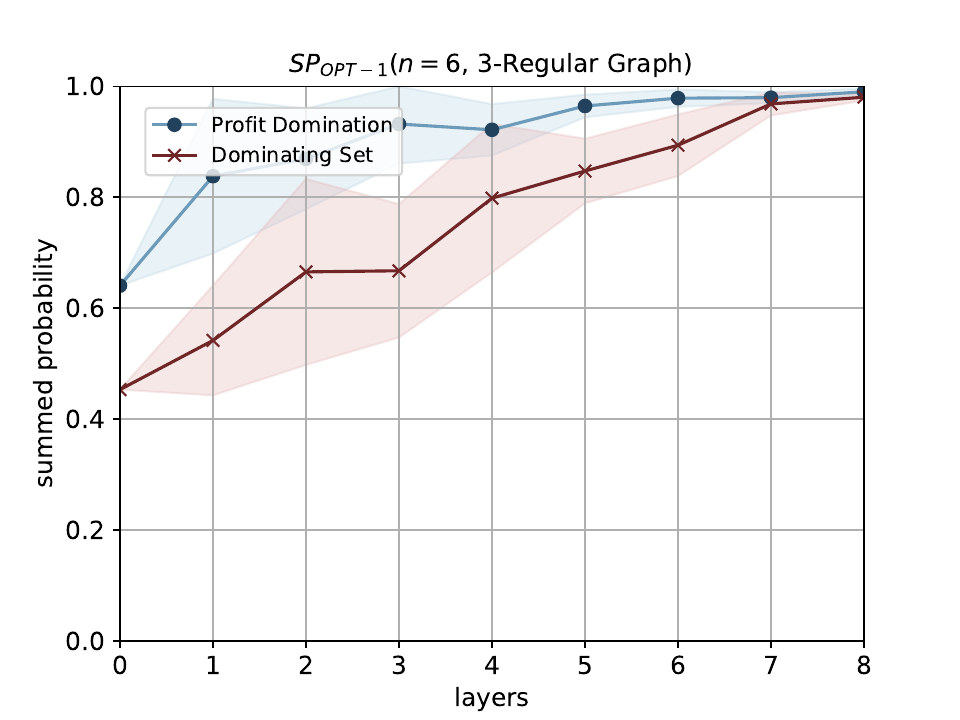} &
    \includegraphics[width=.33\textwidth]{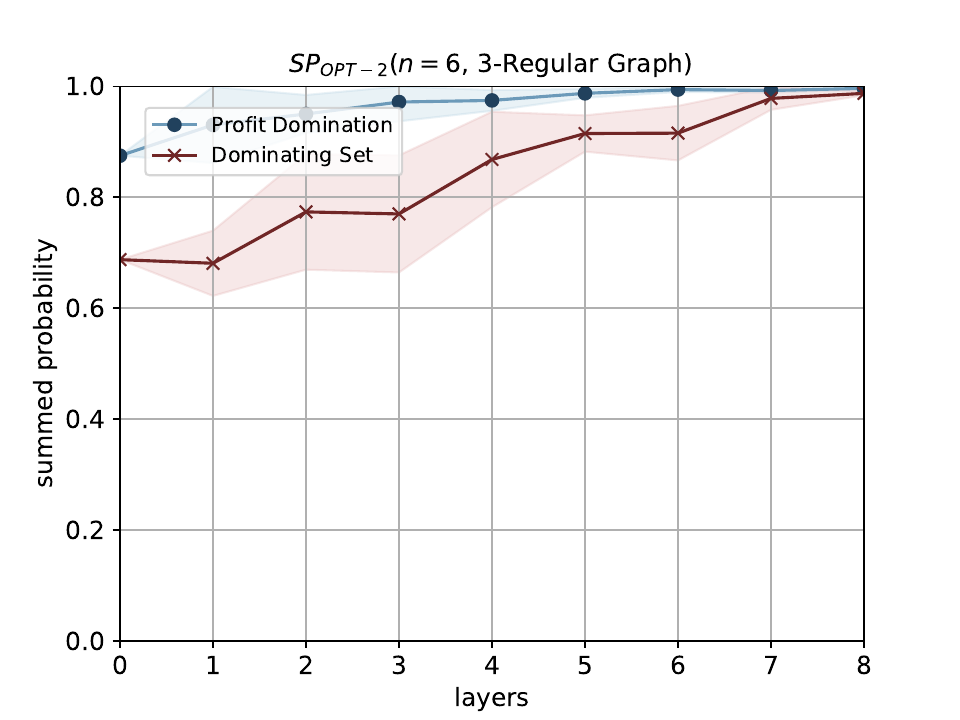} 
    \\
  \includegraphics[width=0.33\textwidth]{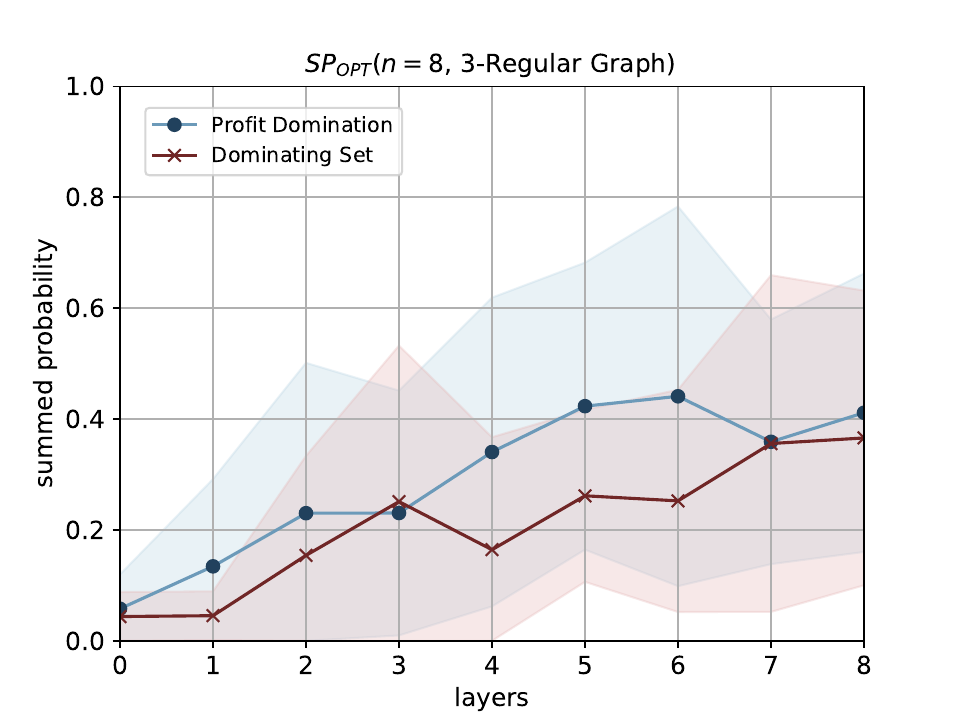} &
    \includegraphics[width=.33\textwidth]{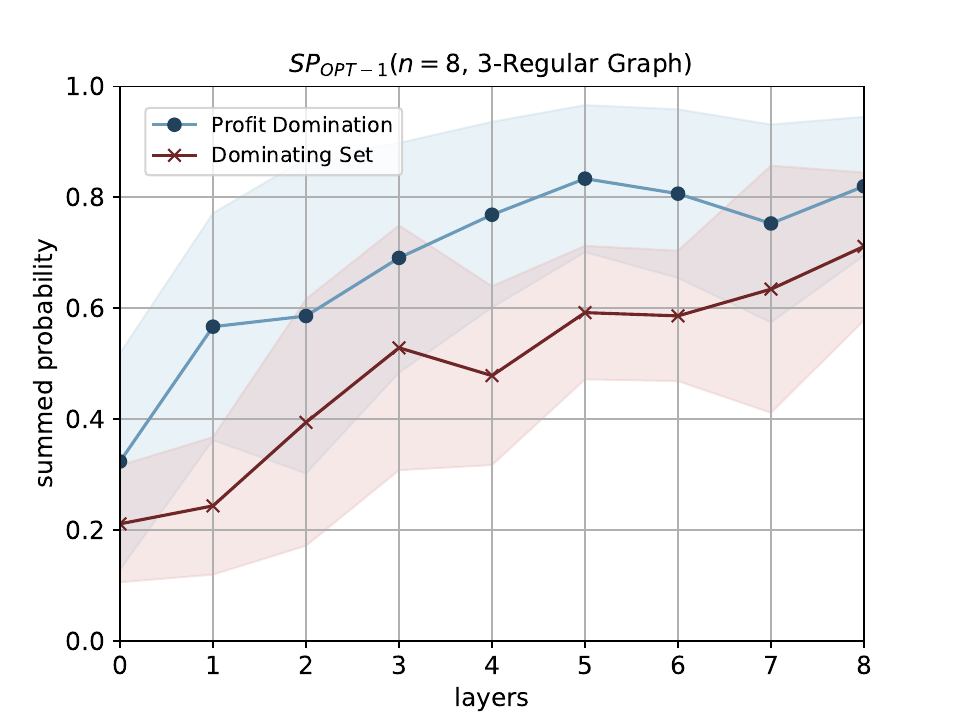} &
    \includegraphics[width=.33\textwidth]{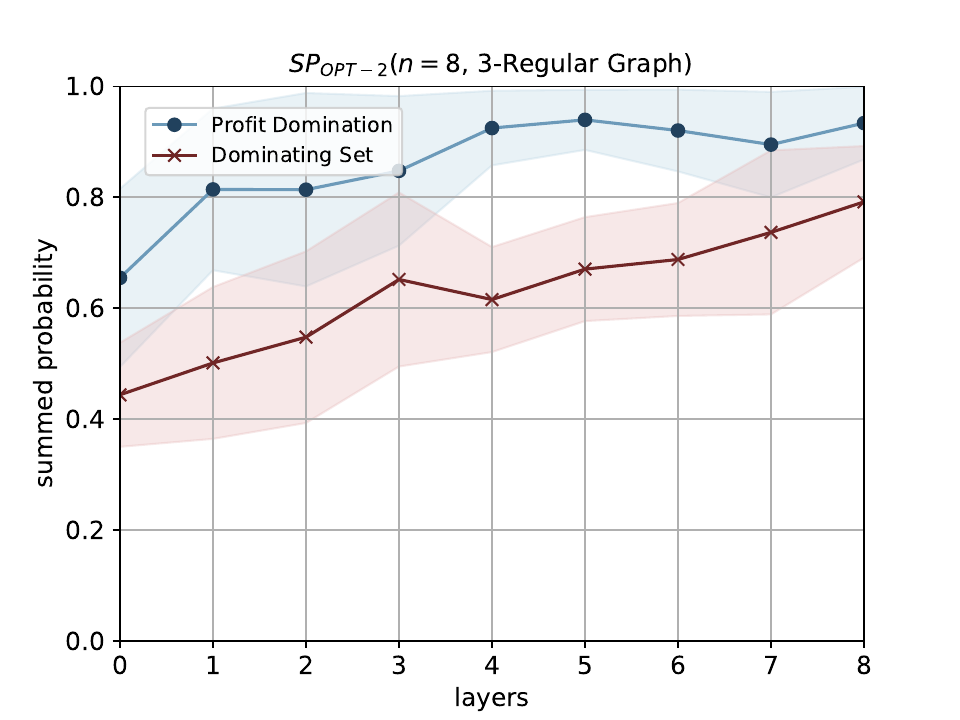} 
    \\
     \includegraphics[width=0.33\textwidth]{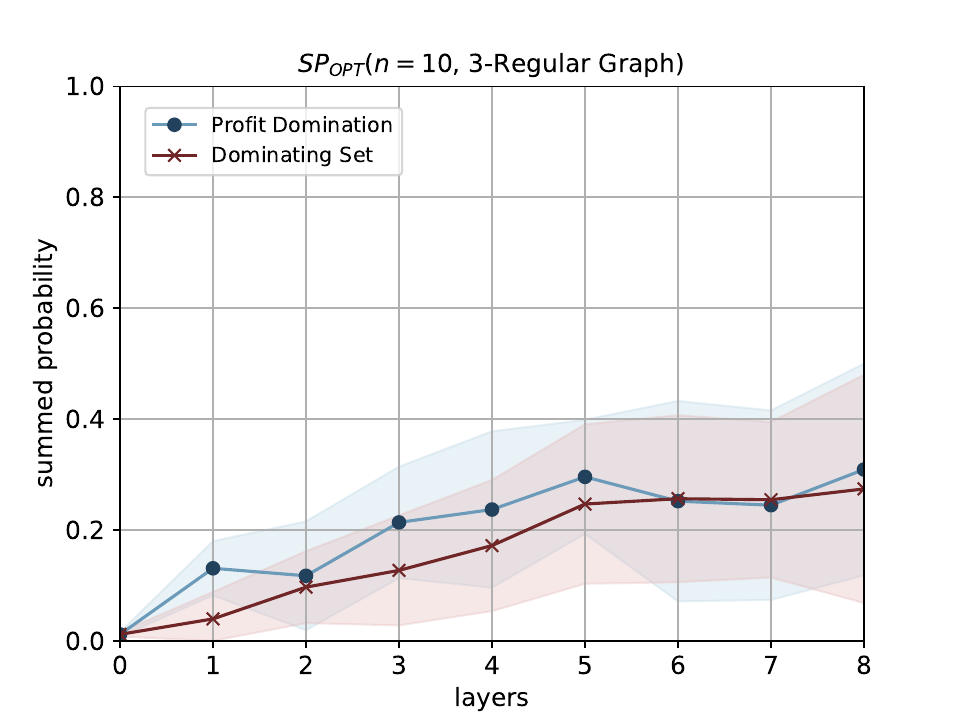} &
     \includegraphics[width=.33\textwidth]{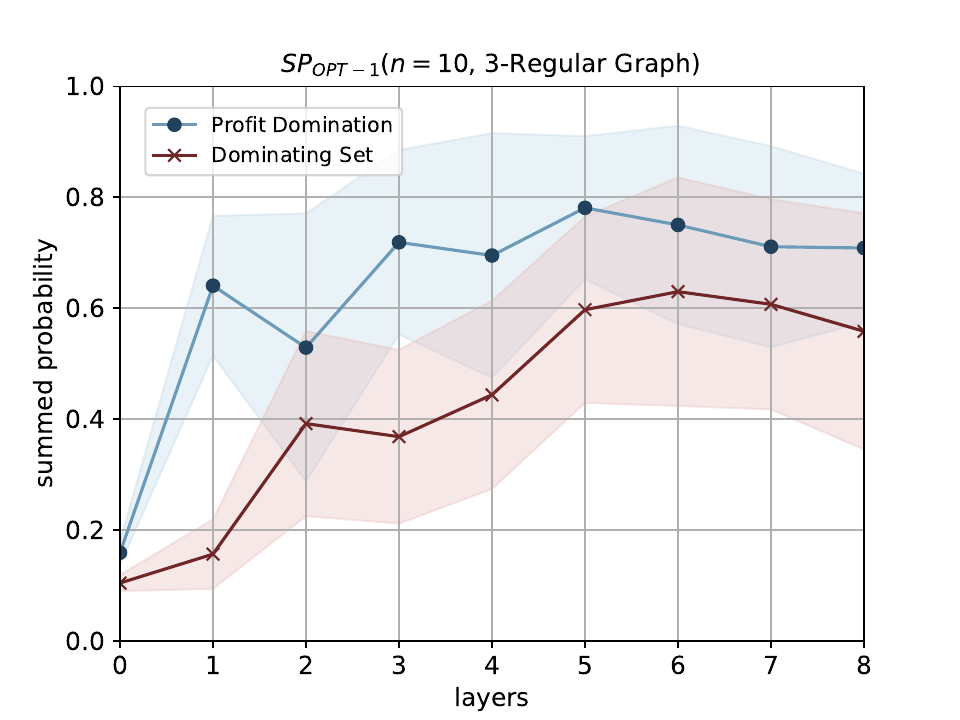} &
     \includegraphics[width=.33\textwidth]{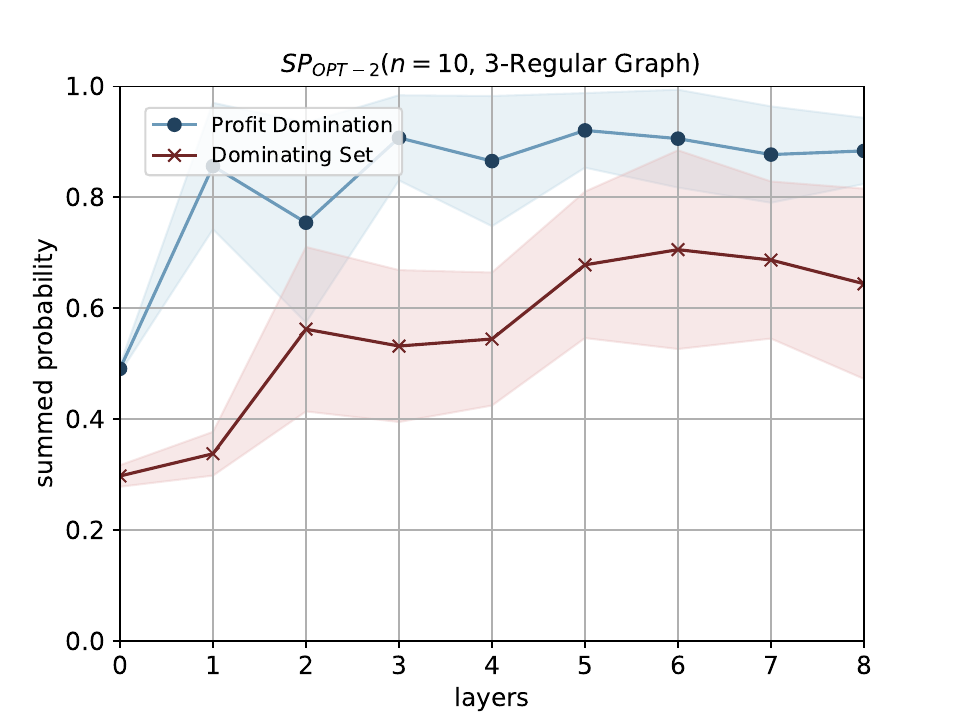} 
  \end{tabular}
    \caption{Probabilities of optimal \& near-optimal solutions obtained over 8 layers for \textit{\textsc{MinDS}} \& \textit{\textsc{MaxPD}} averaged over 10 3-regular graphs. The sub-figures in the first column show the summed probability of all the optimal solutions with $n \in \{6, 8, 10\}$. The sub-figures  in the  $2^{nd}$ column depict the summed probability of obtaining the optimal solution and the $2^{nd}$ best solution. The sub-figures  in the  $3^{rd}$ column indicates the summed probability of the optimal, $2^{nd}$ best and the $3^{rd}$ best solutions.}
    \label{fig:ds-opt-summed-probs}
\end{figure*}

\begin{figure*}[!htb]

        \centering
         \begin{tabular}{@{}cccc@{}}
         \includegraphics[width=0.45\textwidth]{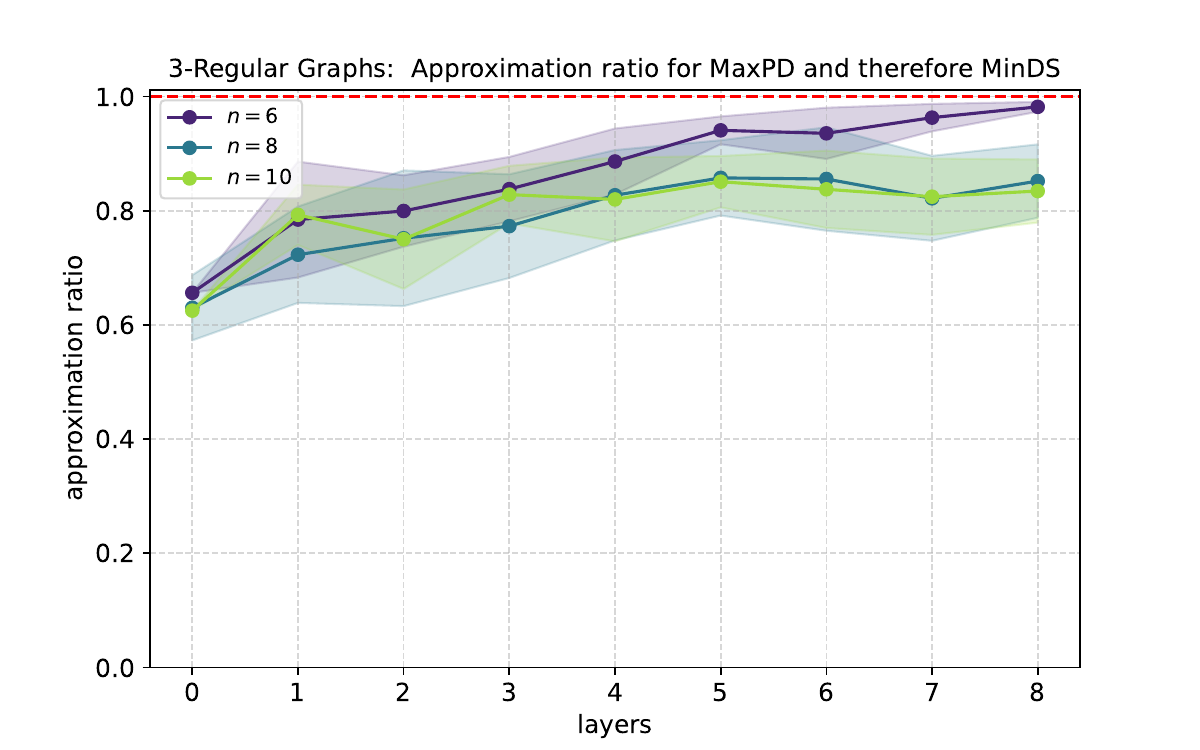} &
         \includegraphics[width=0.45\textwidth]{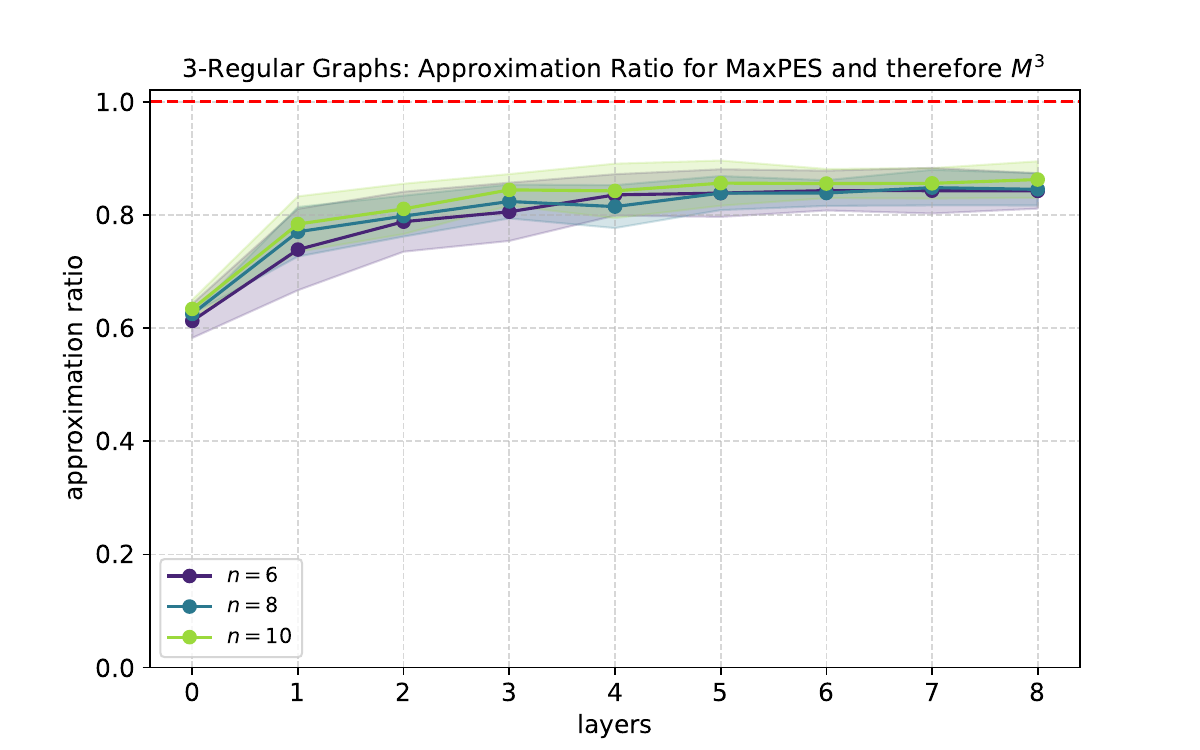}
          \end{tabular}
    \caption{Approximation ratios computed for \textit{\textsc{MaxPD}} and \textit{\textsc{MaxPES}} on PennyLane for $n\in\{6, 8, 10\}$ averaged over 10 3-regular graphs.   The profit of solutions to \textsc{MinDS} are guaranteed to be no worse than \textsc{MaxPD} (Alg.~\ref{alg:add_vertices}), and thus the approximation ratio is preserved for \textsc{MinDS}. Similarly, Algs.~\ref{alg:add_edges} and~\ref{alg:eds_to_mm} imply that this approximation ratio also applies to \textsc{M$^3$}.}
    \label{fig:approx-ratios-maxpd-maxpes}
    
\end{figure*}

In our analysis, we examine the individual probabilities of obtaining feasible solutions for the constrained problem, comparing Hamiltonians for both SCOOP twins using the standard QAOA implementation. Fig.~\ref{fig:ds-pp-probs} shows a probability distribution for \textit{\textsc{MaxPD}} and \textit{\textsc{MinDS}}  with $p\in \{1, 3, 5\}$ layers for the five node graph shown in the inset of Fig.~\ref{fig:ds-pp-probs}, run on PennyLane's standard analytical simulator (\texttt{default.qubit}). The classical optimizer used is the Root Mean Squared Propagation (RMSProp). The quantum-classical loop is run 400 times to obtain the results. The bar in the burgundy color in Fig.~\ref{fig:ds-pp-probs} shows the result of post-processing profit domination results to dominating set using Alg.~\ref{alg:add_vertices}. As the number of layers increases, we can see a high probability (greater than 0.8) of obtaining the optimal result. 

% \begin{figure*}[!htbp]
%     \centering
%     \begin{tabular}{@{}cccc@{}}

%     \includegraphics[width=0.33\textwidth]{plots/ds/sp/6_3reg_p_dom.pdf} &
%     \includegraphics[width=.33\textwidth]{plots/ds/sp/6_3reg_opt1_p_dom.pdf} &
%     \includegraphics[width=.33\textwidth]{plots/ds/sp/6_3reg_opt2_p_dom.pdf} 
%     \\
%   \includegraphics[width=0.33\textwidth]{plots/ds/sp/8_3reg_p_dom.pdf} &
%     \includegraphics[width=.33\textwidth]{plots/ds/sp/8_3reg_opt1_p_dom.pdf} &
%     \includegraphics[width=.33\textwidth]{plots/ds/sp/8_3reg_opt2_p_dom.pdf} 
%     \\
%      \includegraphics[width=0.33\textwidth]{plots/ds/sp/10_3reg_p_dom.pdf} &
%      \includegraphics[width=.33\textwidth]{plots/ds/sp/10_3reg_opt1_p_dom.pdf} &
%      \includegraphics[width=.33\textwidth]{plots/ds/sp/10_3reg_opt2_p_dom.pdf} 
%   \end{tabular}
%     \caption{Probabilities of optimal and near-optimal solutions obtained over eight layers for \textit{\textsc{MinDS} and \textit{\textsc{MaxPD}}}. The figures or graphs in the first column show the summed probability of all the optimal solutions averaged over ten graphs with $n = \{6, 8, 10\}$. The figures or graphs in the  second column depict the summed probability of obtaining the optimal solution and the second best solution. The figures or graphs in the  third column indicates the summed probability of the optimal, second best and the third best solutions. Each row indicates a different size of 3-regular graph.}
%     \label{fig:ds-opt-summed-probs}
% \end{figure*}

Fig.~\ref{fig:ds-opt-summed-probs} shows summed optimal and near-optimal probabilities for \textsc{MaxPD} and \textsc{MinDS}. For \textsc{MinDS} we present the results with penalties set to $A=3, B=2$. The results are averaged over 10 3-regular graphs each of sizes $n\in\{6, 8, 10\}$. For a 3-regular graph, when all three neighbors of a node interact to contribute jointly to the cost, it results in terms involving four variables (the node plus its three neighbors), leading to a polynomial with a maximum degree of 4. Thus, in this case, the QAOA operates on a cost function of degree 4, rather than the more typical degree 2 in QUBOs. 
%Classical optimization is more challenging as the degree of the polynomial increases. 

% Higher-degree polynomials introduce more complex interactions between variables, making the optimization landscape harder to navigate compared to lower-degree cases like quadratics.

Fig.~\ref{fig:approx-ratios-maxpd-maxpes} shows the approximation ratio obtained for \textsc{MaxPD} and \textsc{MaxPES} for $n\in\{6, 8, 10\}$ on 3-regular graphs. The derived approximation ratio applies not only to the SCOOP variants but also to the constrained versions of the problem. Specifically, Theorem~\ref{thm:equiv-ds-pd} and Theorem~\ref{thm:equiv-mmm-eds} guarantee that the approximation ratio remains valid under the imposed constraints of \textsc{MinDS} and \textsc{M$^3$}.

% Comparing the results in Figures~\ref{fig:approx-ratios-maxpd} and Fig.\ref{fig:ds-opt-summed-probs} with those reported by Dinneen et al.~\cite{dineen2017}, who executed \textsc{MinDS} on D-Wave’s quantum annealers, we observe significant improvements. For a comparable $K_{3,3}$ graph, Dineen et al. report only a $2\%$ probability of sampling the best solution. In contrast, our method achieves at least a $40\%$ probability of sampling the best solution at $p=1$ (i.e., one layer of cost and mixer operators in QAOA) for a 6-node 3-regular graph. Furthermore, while Dineen et al. report a $68\%$ probability of finding a feasible solution, our approach yields over $60\%$ probability of sampling near-optimal (only best and second-best) solutions at $p=1$ for $n = {6, 8, 10}$—a probability that only increases with additional layers or broader solution sets.

From Figs.~\ref{fig:ds-opt-summed-probs} and~\ref{fig:approx-ratios-maxpd-maxpes}, we observe that our method achieves an average probability of   $10\%-40\%$  of sampling the best solution at $p=1$ for 3-regular graphs with up to 10 nodes. Our approach yields over $60\%$ probability of sampling near-optimal solutions at $p=1$ for $n \in \{6, 8, 10\}$—a probability that only increases with additional layers or broader solution sets. Due to the direct relationship with its constrained SCOOP twin \textsc{MinDS}, post-processing these solutions will only make the probability better (but not worse, see Theorem~\ref{thm:equiv-ds-pd}). 
% To the best of our knowledge, recent work on solving dominating set problems using quantum techniques remain limited~\cite{pan2024solving, guerrero2020solving}, and the \textsc{$M^3$} problem has not been explored beyond the formulation presented by Lucas~\cite{Lucas2014}. 

In the past, for \textsc{MinDS}, comprehensive experiments have been done only on quantum annealing hardware by Dinneen et al. in 2017~\cite{dineen2017}.  \textsc{M}$^3$, described in Lucas (2014)~\cite{Lucas2014}, has not yet been tested on QA or gate-based hardware.

% \begin{figure}[!htb]

%         \centering
%         \includegraphics[width=\columnwidth]{}
        
%     \caption{\textcolor{magenta}{TO be updated: placeholder for space considerations. Approximation ratios computed for \textit{\textsc{MaxPES}} on PennyLane for $n\in\{6, 8, 10\}$ on 3-regular graphs}}
%     \label{fig:approx-ratios-maxpes}
% \end{figure}

% - side product: MMM also covers edge dominating set
% - what other problems can this solve: Non-blocker, edge dominating set, 

% \input{sections/08-discussion}
\section{Conclusions and Future Work}
\label{sec:conclusions}

We introduced our novel SCOOP framework (Sec.~\ref{sec:profit-framework}) that can serve as a blueprint to solve constrained COPs using vanilla QAOA. The SCOOP framework enables the discovery of optimal and  -optimal solutions. Our framework guides the derivation of an unconstrained COP from the constrained COP. This process results in a penalty-free cost Hamiltonian that is well-suited for QAOA. Subsequently, the viable solutions obtained by QAOA are efficiently post-processed. Our approach comes with the benefits that it scales for all inputs as it avoids penalizing solutions that do not satisfy the constraints, and supports the determination of near-optimal solutions. 

As use cases to apply the SCOOP framework (Secs.~\ref{sec:ds}--\ref{sec:setCover}) we chose the NP-hard COPs \textit{\textsc{MinDS}}, \textit{\textsc{M$^3$}}, and \textit{\textsc{MinSC}}. For each, we determine its unconstrained SCOOP twin and describe its cost Hamiltonian as a HUBO. With \textit{\textsc{MaxPES}}, and \textit{\textsc{MaxPSC}} we not only introduced new unconstrained COPs but also showed them to be NP-hard. 

Our SCOOP framework not only applies to the problems reported in this paper, but also to other NP-hard constrained COPs such as the three transformational equivalent problems (\textsc{MinVC}, \textsc{MaxIS}, and \textsc{MaxCl}) featured in \cite{Angara2025}.   Table~\ref{tab:scpp_twin_mapping} gives a summary of problems that are identified to be SCOOP twins.

For both \textit{\textsc{MinDS}} and \textit{\textsc{M$^3$}}, we discuss our experimental results in terms of probabilities, summed probabilities of optimal and near-optimal solutions, and approximation ratios. These experiments were conducted using Xanadu’s PennyLane simulator on 3-regular graphs with up to ten qubits and eight layers of QAOA.

% Our approach scales for all inputs

% Our approach supports the determination of near-optimal solutions

% Our Scoop framework is backwards-compatible wrt to Angara et al. problems.

% Blueprint

\subsubsection*{Future Work} 
%\color{red} Add that one should try to find more SCOOP twins? \color{black}
As part of future work, we aim to extend our framework to a broader class of constrained optimization problems. We plan to investigate HUBO formulations within the context of QAOA by analyzing the complexity of their cost landscapes and comparing them to their quadratized counterparts. Furthermore, we intend to execute both QUBO and HUBO formulations of the unconstrained SCOOP twins on IBM's quantum hardware.
%\subsection*{Extensions to QTensor Simulator}
Argonne QTensor, a tensor-network simulator, enables efficient classical simulation of quantum circuits using tensor-network contractions. This is well studied for  \textit{\textsc{MaxCut}}  with QAOA~\cite{lykov2021performance}. Design principles of QTensor allow for ease of extensibility of the framework with minimal changes to the underlying QTensor architecture. In Angara et al.~\cite{Angara2025}, QTensor is used to simulate QUBOs for the unconstrained twins of constrained optimization problems on sparse and 3-regular graphs with up to 70 nodes. In its current state, only QUBOs can be formulated for tensor network simulation. We plan to extend this framework to include formulations that allow higher-order terms and their expectation value calculation on QTensor. 

\newpage
\bibliographystyle{IEEEtran}
\bibliography{IEEEabrv, hubos}

\end{document}